\theoremstyle{plain}
\newtheorem{theorem}{Theorem}
\newtheorem{lemma}[theorem]{Lemma}
\newtheorem{corollary}[theorem]{Corollary}
\theoremstyle{definition}
\newtheorem{definition}[theorem]{Definition}
\theoremstyle{remark}
\setlist[itemize]{label=--}
\setlist[enumerate]{label=(\arabic*),labelindent=\parindent,leftmargin=*}
\DeclareMathOperator{\indeg}{indeg}
\DeclareMathOperator{\outdeg}{outdeg}
\DeclareMathOperator{\poly}{poly}
\DeclareMathOperator{\ex}{ex}
\newcommand{\N}{\mathbb{N}}
\newcommand{\A}{\EuScript{A}}
\newcommand{\B}{\EuScript{B}}
\newcommand{\paths}{\EuScript{P}}
\newcommand{\trees}{\EuScript{T}}
\newcommand{\nout}{N_\text{out}}
\newcommand{\nin}{N_\text{in}}
\newcommand{\local}{\ensuremath{\mathsf{LOCAL}}}
\newcommand{\congest}{\ensuremath{\mathsf{CONGEST}}}
\newcommand{\pram}{\ensuremath{\mathsf{PRAM}}}
\newcommand{\namedref}[2]{\hyperref[#2]{#1~\ref*{#2}}}
\newcommand{\sectionref}[1]{\namedref{Section}{#1}}
\newcommand{\theoremref}[1]{\namedref{Theorem}{#1}}
\newcommand{\figureref}[1]{\namedref{Figure}{#1}}
\newcommand{\lemmaref}[1]{\namedref{Lemma}{#1}}
\newenvironment{myabstract}
               {\list{}{\listparindent 1.5em%
                        \itemindent    \listparindent
                        \leftmargin    1cm
                        \rightmargin   1cm
                        \parsep        0pt}%
                \item\relax}
               {\endlist}
\newenvironment{mycover}
               {\list{}{\listparindent 0pt
                        \itemindent    \listparindent
                        \leftmargin    1cm
                        \rightmargin   1cm
                        \parsep        0pt}%
                \raggedright
                \item\relax}
               {\endlist}
\newcommand{\myemail}[1]{\,$\cdot$\, {\small #1}\par\vspace{2pt}}
\newcommand{\myaff}[1]{{\small #1\par}\bigskip}
\renewcommand{\poly}{\operatorname{poly}}
\begin{document}

\mbox{}
\begin{mycover}
{\huge \bfseries Deterministic subgraph detection in broadcast CONGEST \par}
\bigskip
\bigskip

\textbf{Janne H.\ Korhonen}
\myemail{janne.h.korhonen@aalto.fi}
\myaff{Aalto University}

\textbf{Joel Rybicki}
\myemail{joel.rybicki@helsinki.fi}
\myaff{University of Helsinki}

\end{mycover}

\bigskip
\begin{myabstract}
\noindent\textbf{Abstract.} We present simple deterministic algorithms for subgraph finding and enumeration in the broadcast \congest{} model of distributed computation:
\begin{itemize}
    \item For any constant $k$, detecting $k$-paths and trees on $k$ nodes can be done in $O(1)$ rounds.
    \item For any constant $k$, detecting $k$-cycles and pseudotrees on $k$ nodes can be done in $O(n)$ rounds.
    \item On $d$-degenerate graphs, cliques and $4$-cycles can be enumerated in $O(d + \log n)$ rounds, and $5$-cycles in $O(d^2 + \log n)$ rounds.
\end{itemize}
In many cases, these bounds are tight up to logarithmic factors. Moreover, we show that the algorithms for $d$-degenerate graphs can be improved to optimal complexity $O(d/\log n)$ and $O(d^2/\log n)$, respectively, in the supported \congest{} model, which can be seen as an intermediate model between \congest{} and the congested clique.
\end{myabstract}
\thispagestyle{empty}
\setcounter{page}{0}
\newpage

\section{Introduction}

\emph{Subgraph detection} is a fundamental problem in algorithmics: given a fixed target graph $H$ and an input graph $G$, the task is to decide whether $G$ contains a subgraph isomorphic to $H$. This problem has been extensively studied in centralised algorithmics, and it is known that depending on the graph $H$, this problem can have very different complexities---for example, if $H$ is a path, the problem can be solved in linear (in the number of nodes in $G$) time~\cite{alon1995color}, but for cliques this is thought not to be the case~\cite{Downey94parameterizedcomputational}.

In this work, we study this problem in the \congest{} model of distributed computation. So far, most  work has either focused on (a)~lower bounds for subgraph detection~\cite{drucker13}, (b)~randomised upper bounds~\cite{izumi2017,firscher2017}, or (c)~property testing of $H$-freeness~\cite{Fraigniaud2016,Censor-Hillel2016,firscher2017}. By contrast, we focus on deterministic upper bounds, showing that simple techniques---indeed, often techniques well-known in non-distributed algorithmics---result in essentially optimal algorithms. Moreover, all our results work also in the weaker \emph{broadcast \congest{}} model, where nodes send the same message to all neighbours in each communication round.

\paragraph{Results: subgraph detection on general graphs.}

First, we give simple constant-time detection algorithms for the case where $H$ is either a path or a tree. These algorithms are essentially translations of known centralised \emph{fixed-parameter} algorithms, based on \emph{representative families}~\cite{marx2009parameterized,fomin2014efficient,monien1985find}. Specifically, we show that in the broadcast \congest{} model,
\begin{itemize}
 \item $k$-paths can be detected in $O(k2^k)$ rounds, and
 \item any tree on $k$ nodes can be detected in $O(k2^k)$ rounds\footnote{We note that the constant-round algorithms for path and tree detection were also independently discovered by Fraigniaud et al.~\cite{fraigniaud2017arxiv}; however, we give more detailed analysis in terms of the factors dependent on $k$. In fact, very similar algorithms were independently discovered in \emph{four} separate works almost concurrently, including this one~\cite{this-arxiv} and the papers of Fraigniaud et al.~\cite{fraigniaud2017arxiv}, Even et al.~\cite{2017arXiv170504898E} and Fischer et al.~\cite{firscher2017}. The last three appear as a joint paper in DISC~2017~\cite{disc2017property}. While our work is otherwise independent of the others, our pseudotree detection algorithm is motivated by the pseudotree property testing results of Fraigniaud et al.~\cite{fraigniaud2017arxiv}.}. 
\end{itemize}
Using these algorithms as a building block, we then give linear-time detection algorithms for cycles and pseudotrees:
\begin{itemize}
 \item $k$-cycles can be detected in $O(k2^kn)$ rounds, and
 \item any pseudotree (a graph with exactly one cycle) on $k$ nodes can be detected in $O(k2^kn)$ rounds.
\end{itemize}
All algorithms can be implemented so that any node that detects the existence of a copy of $H$ can also output full information about a single copy of $H$.

For odd $k \ge 5$, Drucker et al.~\cite{drucker13} have proven a lower bound of $\Omega(n/\log n)$ rounds for detecting $k$-cycles, meaning that our cycle detection algorithm is optimal up to a logarithmic factor for a constant odd $k$. For $4$-cycles, Drucker et al.~\cite{drucker13} gave a lower bound of $\Omega(n^{1/2}/\log n)$ rounds, and a weaker lower bound for longer even cycles (see \sectionref{sec:related-work}); we prove a simple extension of the Drucker et al.~\cite{drucker13} lower bound by showing that for all even $k \ge 6$, detecting $k$-cycles requires $\Omega(n^{1/2}/\log n)$ rounds in the \congest{} model.

\paragraph{Results: subgraph enumeration on sparse graphs.}
Second, we study subgraph detection and enumeration on sparse input graphs. Specifically, we study a setting where the input graph has small \emph{degeneracy}; in distributed and parallel computing, this setting has been studied in the context of e.g.\ symmetry breaking in the \local{} model~\cite{Barenboim2010,barenboim16locality} and enumerating triangles and 4-cliques of planar graphs in the \pram{} model~\cite{chrobak91orientations}. Denoting by $d$ the degeneracy of the input graph, we show that in the broadcast \congest{} model,
\begin{itemize}
    \item $k$-cliques can be enumerated in $O(d + \log n)$ rounds for any $k$,
    \item $3$-cycles and $4$-cycles can be enumerated in $O(d + \log n)$ rounds, and
    \item $5$-cycles can be enumerated in $O(d^2 + \log n)$ rounds,
\end{itemize}
where enumeration means that every copy of the target subgraph $H$ is output by some node in the network.

This is as far as we can push this framework; we show that already detecting $k$-cycles for $k \ge 6$ requires $\Omega(n^{1/2}/\log n)$ rounds on graphs of degeneracy $2$. Moreover, a careful examination of the Drucker et al.~\cite{drucker13} lower bound constructions show that detecting $4$-cycles and $5$-cycles requires $\Omega(d/\log n)$ rounds.
 
Finally, we discuss how the results on detecting subgraph in sparse graphs can be translated into the \emph{supported \congest{} model} proposed by Schmid and Suomela~\cite{schmid13local-sdn}. In this model, we can close many of the logarithmic gaps between the upper and lower bounds that remain in the \congest{} model. 

\section{Related work}\label{sec:related-work}

\paragraph{Subgraph detection upper bounds.} For deterministic subgraph detection in the \congest{} model, the only prior works we are aware of are the $O(n^{1/2})$ round algorithm for $4$-cycle detection by~Drucker et al.~\cite{drucker13}, and the independent discovery of the constant-round path and tree detection algorithms~\cite{fraigniaud2017arxiv,disc2017property}. In the \emph{congested clique} model, deterministic subgraph detection algorithms were given by Dolev et al.~\cite{tritri} and Censor-Hillel et al.~\cite{censor2015algebraic}; the latter is in particular noteworthy from our perspective, as it used \emph{colour-coding} techniques from centralised fixed-parameter algorithmics to obtain fast cycle detection algorithms for cycles of arbitrary length.

Randomised subgraph detection and listing in the \congest{} model has been recently receiving attention from multiple authors. Izumi and Le Gall~\cite{izumi2017} gave an $\tilde{O}(n^{2/3})$ round algorithm for detecting triangles, and a $O(n^{3/4}\log n)$ round algorithm for enumerating triangles. Independent of our work, Fischer et al.~\cite{firscher2017} gave a colour-coding algorithm that can detect any constant-size tree on $k$ nodes with constant probability in $O(k^k)$ rounds.

\paragraph{Subgraph detection lower bounds.} As noted before, Drucker et al.~\cite{drucker13} have studied lower bounds for cycle detection in the \congest{} model. Specifically, their lower bound for $k$-cycle detection is $\Omega(\ex(n,C_k)/n \log n)$ rounds, where $\ex(n,C_k)$ is the \emph{Tur\'{a}n number} for cycles, that is, the maximum number of edges in a $k$-cycle-free graph with $n$ nodes. This lower bound is $\Omega(n/\log n)$ for odd cycles, and $\Omega(n^{1/2}/\log n)$ for $4$-cycles. However, for longer even cycles, this can give at most $\Omega(n^{1+2/k}/\log n)$ due to known bounds for Tur\'{a}n numbers, and matching bounds on Tur\'{a}n numbers are only known for $k = 6 $ and $k = 8$; see e.g.\ Pikhurko~\cite{pikhurko2012} and references therein.

To our knowledge, no other subgraph detection lower bounds are known in the \congest{} model. In particular, proving lower bounds for triangle detection seems to be a particularly difficult challenge. However, for the broadcast congested clique model, Drucker et al.~\cite{drucker13} give an $\Omega(n/e^{\sqrt{\log n} }\log n)$ round lower bound, which also applies to the broadcast \congest{} model. Moreover, for triangle enumeration lower bounds are known, also in the stronger congested clique model~\cite{izumi2017,pandurangan2016tight}.

\paragraph{Property testing for $H$-freeness.} Property testing of $H$-freeness in the \congest{} model is another question that has received a lot of attention lately. In this setting, an algorithm has to correctly decide with probably with probability at least $2/3$ if the input graph is (a) $H$-free---that is, does not contain a subgraph isomorphic to $H$---or (b) $\varepsilon$-away from being $H$-free; in the intermediate case, the algorithm can perform arbitrarily. See e.g. Censor-Hillel et al.~\cite{Censor-Hillel2016} for complete definitions.

Property testing algorithms for triangle-freeness were given by Censor-Hillel et al.~\cite{Censor-Hillel2016}; for $H$-freeness for graphs on $4$ nodes by Fraigniaud et al.~\cite{fraigniaud2017arxiv} and Even et al.~\cite{2017arXiv170504898E}; and for $H$-freeness for most graphs on $5$ nodes by Fischer et al.~\cite{firscher2017} Moreover, property testing algorithms for tree and cycle freeness---using techniques of fixed-parameter algorithmics flavour---have been discovered recently~\cite{2017arXiv170504898E,firscher2017,fraigniaud2017arxiv,disc2017property}.

\section{Preliminaries}

\paragraph{Set notation.}
Let $\N = \{0,1,\ldots\}$ be the set of non-negative integers. For integer $n \in \N$, we use the notation $[n] = \{ 1,2,\dotsc,n \}$. For any set $A$, we use $2^A = \{ B : B \subseteq A \}$ to denote the power set of $A$.

\paragraph{Graphs.} A  graph is a pair $G = (V,E)$, where $V$ is the set of nodes and $E \subseteq 2^V$ is the set of edges. We use $n = |V|$ to denote the number of nodes in the graph. An \emph{orientation} $\sigma$ of graph $G$ is a labelling of the edges that assigns a direction $\sigma(\{u,v\}) \in \{ u \rightarrow v, u \leftarrow v\}$ to every edge $\{u,v\} \in E$.

The open neighbourhood of a node $v \in V$ is the set $N(v) = \{ u \in V : \{u,v\} \in E\}$ and the closed neighbourhood is the set $N^+(v) = N(v) \cup \{ v \}$. The degree of a node $v \in V$ is $\deg(v) = |N(v)|$. The neighbours of $v \in V$ with incoming and outgoing edges under an orientation $\sigma$ are denoted by $\nin(v) = \{ u \in N(v) : \sigma(\{u,v\}) = u \rightarrow v\}$ and $\nout(v) = N(v) \setminus \nin(v)$. The indegree of $v \in V$ under an orientation $\sigma$ is $\indeg(v) = |\nin(v)| $ and the outdegree is given by $\outdeg(v) = |\nout(v)|$. 

For a graph $G$, a subgraph $G' \subseteq G$ of $G$ is a graph $G' = (V',E')$, where $V'\subseteq V$ and $E' \subseteq E \cap 2^{V'}$. The subgraph induced by the node set $A \subseteq V$ is $G[A] = (A, E')$, where $E' = \{ e \in E : e \subseteq A \}$. Similarly, the subgraph induced by an edge set $F \subseteq E$ is $G[F] = (V',F)$, where $V' = \{ u \in e : e \in F\}$. A graph $G$ is \emph{$d$-degenerate} if every subgraph $G' \subseteq G$ contains a node with degree at most $d$. Note that every graph is trivially $(n-1)$-degenerate and any graph with arboricity $a$ has degeneracy $a \le d \le 2a-1$. 

\paragraph{\boldmath \congest{} and broadcast \congest{}.} The \congest{} model is a variant of classical \local{} model of distributed computation with additional constraints on communication bandwidth~\cite{peleg00distributed}. The distributed system is represented as a network $G = (V,E)$, where each node $v \in V$ executes the same algorithm in synchronous rounds, and the nodes collaborate to solve a graph problem with input $G$. Each round, all nodes
\begin{enumerate}
\item perform an unlimited amount of local computation,
\item send a possibly different $O(\log n)$-bit message to each of their neighbours, and
\item receive the messages sent to them.
\end{enumerate}
The time measure is the number of synchronous rounds required. 
Each node is assumed to have a unique identifiers from the set $\{0, \ldots, \poly(n)\}$.

The broadcast \congest{} is a weaker version of \congest{}, with the additional constraint that all nodes have to send the same message to each of their neighbours.

\section{Finding paths, cycles and trees}

\paragraph{Representative families.}
The general cycle and path detection algorithms are based on \emph{representative families}~\cite{monien1985find,erdos-representative}. For a basic intuition, consider a setting where we have a collection of objects (in this case, sets) of size $p$, and we want to extend them by adding at most $q$ more elements. Representative families allow us to `compress' our collection of objects so that if a member of the original collection could be extended by specific $q$ elements, then the compressed collection also contains a member that can be extended by the same elements. Indeed, while we only use the set family version of this theory, it can be extended to \emph{matroids}~\cite{fomin2014efficient}.

\begin{definition}
Let $\A \subseteq 2^{[n]}$. We say that $\widehat{\A} \subseteq \A$ is \emph{$q$-representative} for $\A$ if for each $B \subseteq [n]$ with $|B| \le q$, there is a set $A \in \A$ with $A \cap B = \emptyset$ if and only if there is $A \in \widehat{\A}$ with $A \cap B = \emptyset$.
\end{definition}

% The following upper bound on the size of the of a $q$-representative family follows from the Two-Families Theorem of \citet{bollobas1965generalized}.

\begin{theorem}[\cite{erdos-representative}]\label{thm:rep-size}
Let $\A \subseteq 2^{[n]}$ consist of sets of size at most $p$. Then there is a $q$-representative $\widehat{\A} \subseteq \A$ with $|\widehat{\A}| \le \binom{p + q}{p}$.
\end{theorem}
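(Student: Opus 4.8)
The plan is to derive this from \emph{Bollob\'as's set-pair inequality}: if $(A_1,B_1),\dots,(A_m,B_m)$ are pairs of finite sets with $A_i\cap B_i=\emptyset$ for all $i$ and $A_i\cap B_j\neq\emptyset$ for all $i\neq j$, and $|A_i|\le p$, $|B_i|\le q$ for all $i$, then $m\le\binom{p+q}{p}$. Given this inequality, the theorem follows from a short extremal argument; since the inequality is not among the stated prerequisites, I would also include a self-contained proof of it via the random permutation method, so that the whole argument stands on its own.

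\textbf{Reduction step.} Since $\A\subseteq 2^{[n]}$ is finite, I would start from $\widehat{\A}:=\A$ and repeatedly delete a set whenever the remaining family is still $q$-representative, arriving at an inclusion-minimal $q$-representative subfamily $\widehat{\A}=\{A_1,\dots,A_m\}$. Note that for any $\widehat{\A}\subseteq\A$ the ``if'' direction in the definition of $q$-representative is automatic, so only the ``only if'' direction must be maintained under deletions. For each index $i$, inclusion-minimality means $\widehat{\A}\setminus\{A_i\}$ fails to be $q$-representative, so there is a witness set $B_i\subseteq[n]$ with $|B_i|\le q$ such that some member of $\A$ is disjoint from $B_i$ but no member of $\widehat{\A}\setminus\{A_i\}$ is. Because $\widehat{\A}$ itself is $q$-representative, some member of $\widehat{\A}$ must be disjoint from $B_i$, and the only possibility is $A_i$; hence $A_i\cap B_i=\emptyset$ while $A_j\cap B_i\neq\emptyset$ for every $j\neq i$. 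As $|A_i|\le p$, the pairs $(A_1,B_1),\dots,(A_m,B_m)$ satisfy the hypotheses of the set-pair inequality, giving $|\widehat{\A}|=m\le\binom{p+q}{p}$.

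\textbf{Proof of the set-pair inequality.} Fix a ground set containing $\bigcup_i (A_i\cup B_i)$ and take a uniformly random linear order on it. Let $E_i$ be the event that every element of $A_i$ precedes every element of $B_i$; then $\Pr[E_i]=1/\binom{|A_i|+|B_i|}{|A_i|}\ge 1/\binom{p+q}{p}$, using monotonicity of binomial coefficients. The key claim is that the events $E_i$ are pairwise disjoint: if $E_i$ and $E_j$ both held for some $i\neq j$, compare the positions of the last element of $A_i$ and the last element of $A_j$; in the case where $A_i$ finishes no later, pick $x\in A_i\cap B_j$, which then lies before the last element of $A_j$ (being in $A_i$) yet after it (being in $B_j$ and $E_j$ holding) --- a contradiction, and symmetrically using $A_j\cap B_i\neq\emptyset$ in the other case. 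Pairwise disjointness yields $\sum_i\Pr[E_i]\le 1$, hence $m\le\binom{p+q}{p}$. The degenerate cases where some $A_i$ or $B_i$ is empty force $m\le 1$ and are handled separately.

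\textbf{Main obstacle.} None of the steps is individually difficult, but the two delicate points are: (i) extracting the witness sets $B_i$ and arguing that $A_i$ is the \emph{unique} member of $\widehat{\A}$ disjoint from $B_i$ --- this is precisely where inclusion-minimality must be used, and one must be careful that it is the representativity of $\widehat{\A}$ (not of $\A$) that supplies the disjoint member; and (ii) the pairwise-disjointness argument for the events $E_i$, which is the combinatorial heart of the permutation proof and must be set up so that the ``last element'' comparison forces the contradiction in both symmetric cases.
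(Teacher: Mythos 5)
Your proposal is correct. Note, however, that the paper does not prove this statement at all: it is imported from the literature via the citation, and the text explicitly says it only needs the existence of small representative families. What you have written is a correct, self-contained proof along the standard lines: pass to an inclusion-minimal $q$-representative subfamily, extract for each surviving $A_i$ a witness $B_i$ with $|B_i|\le q$ that only $A_i$ avoids (your observation that the ``if'' direction of the definition is automatic for subfamilies, so only the ``only if'' direction can break under deletion, is exactly the point that makes the witness extraction work), and then apply Bollob\'as's set-pair inequality, proved by the random-permutation method. The cross-intersection condition you derive ($A_j\cap B_i\neq\emptyset$ for all $j\neq i$) is, after renaming indices, the full symmetric hypothesis of the inequality, so the pairwise-disjointness argument for the events $E_i$ goes through in both cases; the degenerate case with an empty $A_i$ or $B_i$ indeed forces $m\le 1$ because the cross-intersection condition cannot hold otherwise. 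Combined with \lemmaref{lemma:rep-transitivity}, this also justifies the paper's remark that every inclusion-minimal $q$-representative family has size at most $\binom{p+q}{p}$.
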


For our purposes it is sufficient to know that small representative families exists; however, slightly worse bounds with corresponding efficient algorithms are known~\cite{marx2009parameterized,fomin2014efficient,erdos-representative}.

Finally, we will use the following simple fact about representative families:

\begin{lemma}\label{lemma:rep-transitivity}
Let $\EuScript{C} \subseteq \B \subseteq \A \subseteq 2^{[n]}$. If $\B$ is $q$-representative for $\A$ and $\EuScript{C}$ is $q$-representative for $\B$, then $\EuScript{C}$ is $q$-representative for $\A$.
\end{lemma}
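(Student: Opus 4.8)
The plan is to unwind the definition directly. Fix an arbitrary $B \subseteq [n]$ with $|B| \le q$; I want to show that there is a set in $\EuScript{C}$ disjoint from $B$ if and only if there is a set in $\A$ disjoint from $B$. Since $\EuScript{C} \subseteq \A$, the ``only if'' direction is immediate: any $C \in \EuScript{C}$ with $C \cap B = \emptyset$ is also a member of $\A$, so it witnesses the existence of a set in $\A$ disjoint from $B$.

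For the ``if'' direction, suppose there is some $A \in \A$ with $A \cap B = \emptyset$. Because $\B$ is $q$-representative for $\A$ and $|B| \le q$, applying the definition of $q$-representativeness to the set $B$ yields a set $A' \in \B$ with $A' \cap B = \emptyset$. Now apply the definition again: $\EuScript{C}$ is $q$-representative for $\B$, so the existence of $A' \in \B$ disjoint from $B$ (with $|B| \le q$) gives a set $C \in \EuScript{C}$ with $C \cap B = \emptyset$. This is exactly what we wanted. Since $B$ was arbitrary, the equivalence holds for all $B$ of size at most $q$, which is precisely the statement that $\EuScript{C}$ is $q$-representative for $\A$.

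There is essentially no obstacle here — the proof is a two-step chain through the definition, and the only thing to be careful about is that the containments $\EuScript{C} \subseteq \B \subseteq \A$ are used (for the trivial direction and to make sure the quantifiers line up), and that the size bound $|B| \le q$ is the same in both hypotheses, so no loss of generality is needed. The lemma is stated as ``simple'' for good reason; the work is entirely in the bookkeeping of which family plays the role of $\A$ and which plays the role of $\widehat{\A}$ in each invocation of the definition.
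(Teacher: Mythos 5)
Your proof is correct: the two-step chain through the definition (using $\EuScript{C}\subseteq\A$ for one direction and composing the two representativeness hypotheses for the other) is exactly the intended argument, which the paper omits as a ``simple fact.'' Nothing is missing.
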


Together \theoremref{thm:rep-size} and \lemmaref{lemma:rep-transitivity} imply that any inclusion-minimal $q$-representative family for a family of sets of size at most $p$ has size at most $\binom{p + q}{p}$.

\paragraph{Finding paths.}

We start by giving a simple path detection algorithm for the \congest{} model, using a well-known technique from fixed-parameter algorithmics~\cite{monien1985find}. Intuitively, the idea is to start with the trivial path detection algorithm that iteratively propagates full information about paths of length at most $\ell = 1, 2, \dotsc, k$ in $k$ phases. This will be quite slow if there are lots of such paths; however, at each step, we construct a representative family for the current set of paths to ensure that we only forward the essential ones. 

Let $k$ be fixed, and define for $v \in V$ and $\ell \in \{ 1, 2, \dotsc, k\}$ the set family
\[ \paths_{v,\ell} = \bigl\{ U \subseteq V \colon \text{there is an $\ell$-path with node set $U$ that ends at $v$} \bigr\}\,.\]
By definition, we have that 
\[ \paths_{v,\ell} = \bigcup_{u \in N(v)} \bigl\{ \{ v \} \cup U \colon U \in \paths_{u,\ell-1} \text{ and } v \notin U \bigr\}\,. \]
The basic idea of the algorithm is that instead of explicitly constructing the families $\paths_{v,1}, \paths_{v,2} \dotsc, \paths_{v,k}$, we iteratively construct families $\widehat{\paths}_{v,1}, \widehat{\paths}_{v,2} \dotsc, \widehat{\paths}_{v,k}$, where $\widehat{\paths}_{v,\ell}$ is a minimal $(k-\ell)$-representative family for $\paths_{v,\ell}$.

Specifically, the algorithm proceeds as follows:
\begin{enumerate}
    \item For $\ell = 1$, we have that $\paths_{v,1} = \{ \{ u,v \} \colon u \in N(v) \}$. Each node $v$ can obtain full information about $N(v)$ in single round, and then compute $\widehat{\paths}_{v,1}$ locally.
    \item For $\ell \ge 2$, after the families $\widehat{\paths}_{v,\ell-1}$ have been constructed, each node $v \in V$ broadcasts the family $\widehat{\paths}_{v,\ell-1}$ to all of its neighbours. Since $\widehat{\paths}_{v,\ell-1}$ is a minimal $(k-\ell+1)$-presentative family for $\paths_{v,\ell-1}$, and $\paths_{v,\ell-1}$ consist of sets of size $\ell$, we have that
    $|\widehat{\paths}_{v,\ell-1}| \le \binom{k+1}{\ell-1} \le 2^{k+1}$.
    In particular, $\widehat{\paths}_{v,\ell-1}$ can be encoded using $O\bigl(k \binom{k+1}{\ell-1} \log n\bigr)$ bits, and broadcasting it to all neighbours can be done in $O\bigl(k \binom{k+1}{\ell-1}\bigr)$ rounds.
    \item Each node $v \in V$, after having received $\widehat{\paths}_{u,\ell-1}$ for each $u \in N(v)$, locally constructs the set
    \[ \paths'_{v,\ell} = \bigcup_{u \in N(v)} \bigl\{ \{ v \} \cup U \colon U \in \widehat{\paths}_{u,\ell-1}  \text{ and } v \notin U  \bigr\}\,.\]
    We observe that $\paths'_{v,\ell}$ is $(k-\ell)$-representative for $\paths_{v,\ell}$. If $W \subseteq V$ is a set of size $k-\ell$ that does not intersect some $U \in \paths_{v,\ell}$, then there is some $u \in N(v)$ and $U' \in \paths_{u,\ell-1}$ such that $W \cup \{ v \}$ does not intersect $U'$. Since $\widehat{\paths}_{u,\ell-1}$ is $(k-\ell+1)$-representative for $\paths_{u,\ell-1}$, there is $R \in \widehat{\paths}_{u,\ell-1}$ such that $R$ does not intersect $W \cup \{ v \}$, and thus $R \cup \{ v \} \in \paths'_{v,\ell}$ does not intersect $W$. Thus, computing a minimal $(k-\ell)$-representative family $\widehat{\paths}_{v,\ell}$ for $\paths'_{v,\ell}$ gives a $(k-\ell)$-representative family for $\paths_{v,\ell}$ by \lemmaref{lemma:rep-transitivity}.
 \end{enumerate}
Clearly, there is a $k$-path terminating at $v$ if and only if $\widehat{\paths}_{v,k}$ is non-empty. There are $k$ phases corresponding to $\ell = 1, 2, \dotsc, k$ in the algorithm, and each of these phases runs in $O\bigl(k \binom{k+1}{\ell-1}\bigr)$ rounds, where the hidden constant does not depend on $\ell$. Since it holds that
is
\[ \sum_{\ell = 1}^k k \tbinom{k+1}{\ell-1} = k\sum_{\ell = 1}^k \tbinom{k+1}{\ell-1} \le k2^k\,,\]
the total running time is $O(k 2^k)$.

Finally, let us observe that it is easy to modify this algorithm to \emph{find} a $k$-path, in the sense that each node $v \in V$ that is an endpoint of a $k$-path has full knowledge of a single $k$-path: we annotate each set $U \in \widehat{\paths}_{v,\ell}$ with a `witness' $\ell$-path with node set $U$ terminating at $v$. This can be done without increasing the asymptotic communication cost.
 
\begin{theorem}
Finding $k$-paths can be done in $O(k2^k)$ rounds in the broadcast \congest{} model.
\end{theorem}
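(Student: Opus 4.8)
The plan is to turn the iterative procedure sketched above into a clean induction on $\ell$, maintaining the invariant that after phase $\ell$ every node $v$ holds a family $\widehat{\paths}_{v,\ell}\subseteq\paths_{v,\ell}$ that is $(k-\ell)$-representative for $\paths_{v,\ell}$, and which by \theoremref{thm:rep-size} (applied to a family of sets of size $O(k)$ with representativeness parameter at most $k$) has at most $2^{k+1}$ members. The base case $\ell=1$ is handled in a single round: $v$ learns $N(v)$, forms $\paths_{v,1}=\{\{u,v\}\colon u\in N(v)\}$, and extracts locally a minimal $(k-1)$-representative subfamily $\widehat{\paths}_{v,1}$, whose existence and size bound are given by \theoremref{thm:rep-size}.

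For the inductive step, I would assume the invariant for $\ell-1$. Each node $u$ broadcasts $\widehat{\paths}_{u,\ell-1}$; since this family consists of $O\bigl(\binom{k+1}{\ell}\bigr)$ sets, each a list of $O(k)$ identifiers of $O(\log n)$ bits, phase $\ell$ costs $O\bigl(k\binom{k+1}{\ell}\bigr)$ rounds, and since $\sum_{\ell=1}^{k}\binom{k+1}{\ell}\le 2^{k+1}$ the phases together take $O(k2^k)$ rounds. After the broadcast, $v$ forms $\paths'_{v,\ell}=\bigcup_{u\in N(v)}\{\{v\}\cup U\colon U\in\widehat{\paths}_{u,\ell-1},\ v\notin U\}$ and extracts a minimal $(k-\ell)$-representative subfamily $\widehat{\paths}_{v,\ell}$. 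The crux is to verify that $\paths'_{v,\ell}$ is already $(k-\ell)$-representative for $\paths_{v,\ell}$: given $W$ with $|W|\le k-\ell$ and a witness $U^{\ast}=\{v\}\cup U'\in\paths_{v,\ell}$ disjoint from $W$, where $U'$ is the node set of an $(\ell-1)$-path ending at a neighbour $u$, we have $U'\in\paths_{u,\ell-1}$, $v\notin U'$, and $U'$ disjoint from $W\cup\{v\}$, a set of size at most $k-\ell+1$; the inductive hypothesis then produces $R\in\widehat{\paths}_{u,\ell-1}$ disjoint from $W\cup\{v\}$, whence $\{v\}\cup R\in\paths'_{v,\ell}$ is disjoint from $W$. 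Since $\widehat{\paths}_{v,\ell}\subseteq\paths'_{v,\ell}\subseteq\paths_{v,\ell}$, \lemmaref{lemma:rep-transitivity} transfers $(k-\ell)$-representativeness from $\paths'_{v,\ell}$ to $\paths_{v,\ell}$, restoring the invariant.

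After phase $k$ the family $\widehat{\paths}_{v,k}$ is $0$-representative for $\paths_{v,k}$, hence non-empty precisely when a $k$-path ends at $v$, so each $v$ decides correctly; to \emph{find} a copy, I would attach to every $U\in\widehat{\paths}_{v,\ell}$ an explicit witness $\ell$-path on node set $U$ ending at $v$ and extend it whenever $U$ is extended by prepending $v$, which inflates message lengths only by a constant factor and leaves the $O(k2^k)$ bound intact. The step I expect to require the most care is the bookkeeping around the representativeness parameters: it must drop by exactly one per phase (each phase spends one coordinate of the forbidden set on the freshly added vertex $v$), and this has to be kept in sync with the corresponding size bounds so that every broadcast family still fits the $O(\log n)$-bit bandwidth; once that is pinned down, the argument is a direct combination of \theoremref{thm:rep-size} and \lemmaref{lemma:rep-transitivity}.
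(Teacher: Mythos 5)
Your proposal is correct and follows essentially the same route as the paper: the same invariant that $\widehat{\paths}_{v,\ell}$ is a minimal $(k-\ell)$-representative family for $\paths_{v,\ell}$, the same argument that $\paths'_{v,\ell}$ inherits representativeness by charging the new vertex $v$ to the forbidden set $W\cup\{v\}$, and the same appeal to \theoremref{thm:rep-size} and \lemmaref{lemma:rep-transitivity} with the identical $O(k2^k)$ accounting and witness-annotation trick. No substantive differences.
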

 
\paragraph{Finding cycles.} We first note that it is easy to adapt the $k$-path algorithm to detect $k$-cycles that contain a fixed node $w \in V$. We modify the definition of $\paths_{v,\ell}$ to require that the paths have $w$ as a starting point; otherwise the algorithm proceeds as before. If any neighbour of $w$ detects a $(k-1)$-path starting from $w$, then there is a $k$-cycle containing~$w$; likewise, if there is a $k$-cycle containing $w$, then some neighbour of $w$ will detect a $(k-1)$-path starting from $w$. Running this cycle detection algorithm for all choices of the starting node in parallel gives allows us to detect $k$-cycles in time $O(k2^k n)$:

\begin{theorem}
In the broadcast \congest{} model,
\begin{enumerate}
    \item finding $k$-cycles containing a fixed node $w \in V$ can be done in $O(k2^k)$ rounds, and
    \item finding $k$-cycles can be done in $O(k2^k n)$ rounds.
\end{enumerate} 
\end{theorem}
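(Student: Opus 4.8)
The plan is to observe that a $k$-cycle through a node $w$ is exactly a simple $(k-1)$-path from $w$ to some neighbour of $w$, closed off by a single edge, and then to run the path-detection algorithm above in a ``$w$-rooted'' variant so that it finds precisely such paths; running this once for each choice of $w$ then handles general $k$-cycle detection.

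For part~(1), fix $w \in V$ and, for $v \in V$ and $\ell \in \{1,\dots,k-1\}$, define $\paths^w_{v,\ell}$ to be the family of node sets of $\ell$-paths that start at $w$ and end at $v$. The recurrence used for $\paths_{v,\ell}$ holds verbatim with this extra restriction, since a $w$-rooted $\ell$-path ending at $v$ decomposes uniquely into a $w$-rooted $(\ell-1)$-path ending at some $u \in N(v)$ followed by the edge $\{u,v\}$, and conversely; hence one runs the representative-families algorithm unchanged, maintaining a minimal $(k-1-\ell)$-representative family $\widehat{\paths}^w_{v,\ell}$ for $\paths^w_{v,\ell}$. The only modification is the base case $\paths^w_{v,1} = \{\{w,v\}\}$ for $v \in N(w)$ and $\paths^w_{v,1} = \emptyset$ otherwise, which costs just one round in which $w$ announces itself to its neighbours (any node that never hears of $w$ --- directly, or via a broadcast partial family that mentions $w$ --- simply keeps empty families throughout). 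Since $\paths^w_{v,\ell}$ consists of sets of size $\ell+1$, \theoremref{thm:rep-size} and \lemmaref{lemma:rep-transitivity} give $|\widehat{\paths}^w_{v,\ell}| \le \binom{k}{\ell+1} \le 2^k$, and the same binomial bookkeeping as for the path algorithm (with $k$ replaced by $k-1$) shows the $k-1$ phases together run in $O(k2^k)$ rounds. For correctness, $\widehat{\paths}^w_{v,k-1}$ is $0$-representative, hence non-empty precisely when there is a $(k-1)$-path from $w$ to $v$; for any neighbour $v$ of $w$ and any such (necessarily simple) path, adding the edge $\{v,w\}$ --- which for $k \ge 3$ is never one of the path's edges --- yields a genuine $k$-cycle through $w$, and every $k$-cycle through $w$ arises this way. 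Thus each $v \in N(w)$ with $\widehat{\paths}^w_{v,k-1} \neq \emptyset$ notifies $w$ in one further round, and annotating each set of $\widehat{\paths}^w_{v,\ell}$ with a witness path, exactly as in the path algorithm, lets the detecting node output a full copy of the cycle at no asymptotic cost.

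For part~(2), run the algorithm of part~(1) once for each $w \in V$, sequentially (say in increasing order of identifier), and stop as soon as some run reports a cycle; since every $k$-cycle contains at least one node, it is found in the run rooted at any of its nodes, so the total cost is $n \cdot O(k2^k) = O(k2^k n)$. I do not expect a deep obstacle here: the substantive checks are that restricting the path families to $w$-rooted paths leaves both the recurrence and the representative-family reductions (\lemmaref{lemma:rep-transitivity} together with the size bound of \theoremref{thm:rep-size}) intact, plus the bookkeeping of how $w$'s identity propagates through the network and how the final answer is reported back to $w$. The $\Theta(n)$ overhead in part~(2) is intrinsic to running $n$ independent rooted searches, and is precisely what makes the bound tight --- up to the logarithmic factor --- against the known $\Omega(n/\log n)$ lower bound for detecting odd $k$-cycles.
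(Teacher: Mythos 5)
Your proposal is correct and matches the paper's argument: both reduce $k$-cycle detection through $w$ to running the representative-families path algorithm restricted to paths rooted at $w$, with a neighbour of $w$ closing the cycle, and then repeat over all choices of $w$. The only (immaterial) difference is that the paper runs the $n$ rooted instances in parallel rather than sequentially; both yield $O(k2^kn)$ rounds.
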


\paragraph{Finding trees.} 

We now show how to extend the path detection algorithm to any target graph $H$ that is acyclic. Let $H$ be a tree on $k$ nodes. Fix an arbitrary node of $H$ as a root, and identify the nodes of $H$ with $1,2,\dotsc, k$ so that if $i$ is a descendant of $j$ then $j > i$. In particular, node $k$ is the root. Denote by $H[i]$ the subtree of $H$ rooted at node $i$, and denote by $s_i$ the number of nodes in $H[i]$.
 
For $v \in V$ and $i \in \{ 1, 2, \dotsc, k\}$, we define the set family
\[ \trees_{v,i} = \bigl\{ U \subseteq V \colon \text{there is a copy of $H[i]$ with node set $U$ and root $v$} \bigr\}\,.\]
Again, if we know $\trees_{u,j}$ for all $u \in N(v)$ and and $j < i$, we can compute $\trees_{v,i}$. Let $c_1, c_2, \dotsc, c_\ell$ be the children of $i$ in $H$. Then $\trees_{v,i}$ contains exactly the sets of form 
\[ \{ v \} \cup U_1 \cup U_2 \cup \dotsb \cup U_\ell\,, \]
where $U_x \in \trees_{u,c_x}$ for some $u \in N(v)$ and $U_x \cap U_y = \emptyset$ for all $x \neq y$.

Similarly to the path algorithm, for $i = 1,2,\dotsc,k$, we iteratively construct $(k-s_i)$-representative family $\widehat{\trees}_{v,i}$ for $\trees_{v,i}$:
\begin{enumerate}
    \item If $i$ is a leaf in $H$, then $\widehat{\trees}_{v,i} = \trees_{v,i} = \{ \{ v \} \}$.
    \item If $i$ is an internal node in $H$ with children $c_1, c_2, \dotsc, c_\ell$, then we construct $\trees'_{v,i}$ by taking all sets of form
    \[ \{ v \} \cup U_1 \cup U_2 \cup \dotsb \cup U_\ell\,, \]
    where $U_x \in \widehat{\trees}_{u,c_x}$ for $u \in N(v)$ and $U_x \cap U_y = \emptyset$ for all $x \neq y$, and compute a minimal $(k-s_i)$-representative family $\widehat{\trees}_{v,i}$ for $\trees'_{v,i}$.
\end{enumerate}
We observe that after step $i$, each node $v$ knows if there is a copy of $H[i]$ that is rooted at $v$, which implies the correctness of the algorithm. Time complexity follows by the same arguments as in the path detection algorithm. To summarise:

\begin{theorem}
For any tree $H$ on $k$ nodes, $H$-subgraph detection can be done in $O(k2^k)$ rounds in the broadcast \congest{} model.
\end{theorem}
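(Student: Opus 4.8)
The plan is to mirror the path-detection argument, replacing the single ``extend by one neighbour'' step with a ``combine the children'' step, and to process the nodes $1,2,\dotsc,k$ of $H$ in the order fixed by the rooting, so that when we reach node $i$ all its children $c_1,\dotsc,c_\ell$ have already been handled. First I would nail down the recurrence: a copy of $H[i]$ rooted at $v$ is exactly what you get by choosing, for each child $c_x$, a neighbour $u_x\in N(v)$ and a copy of $H[c_x]$ rooted at $u_x$, with these $\ell$ copies pairwise vertex-disjoint and none of them containing $v$; and conversely every copy of $H[i]$ rooted at $v$ decomposes this way by restriction. This is precisely the formula for $\trees_{v,i}$ above, and the analogous formula with the families $\widehat{\trees}_{u,c_x}$ in place of $\trees_{u,c_x}$ defines $\trees'_{v,i}$, which each node computes locally after broadcasting, for every $i$, the family $\widehat{\trees}_{v,i}$ once it is known. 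Note that $\trees'_{v,i}\subseteq\trees_{v,i}$ directly from the recurrence.

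The heart of the proof is the claim that $\trees'_{v,i}$ is $(k-s_i)$-representative for $\trees_{v,i}$ whenever each $\widehat{\trees}_{u,c_x}$ is $(k-s_{c_x})$-representative for $\trees_{u,c_x}$, and I would prove it by a greedy-replacement argument. Given $W\subseteq V$ with $|W|\le k-s_i$ and a witness $U=\{v\}\cup U_1\cup\dotsb\cup U_\ell\in\trees_{v,i}$ disjoint from $W$ (with witnesses $u_x\in N(v)$, $U_x\in\trees_{u_x,c_x}$), replace the pieces $U_1,\dotsc,U_\ell$ one at a time. When it is $U_x$'s turn, the set to be avoided is $W\cup\{v\}\cup R_1\cup\dotsb\cup R_{x-1}\cup U_{x+1}\cup\dotsb\cup U_\ell$, whose size is at most $(k-s_i)+1+\sum_{j\neq x}s_{c_j}=k-s_{c_x}$, using $s_i=1+\sum_j s_{c_j}$ and the fact that $q$-representative subfamilies preserve set sizes (so $|R_j|=|U_j|=s_{c_j}$). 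Since $U_x$ itself avoids that set, $(k-s_{c_x})$-representativeness of $\widehat{\trees}_{u_x,c_x}$ produces a replacement $R_x$ that also avoids it; then $\{v\}\cup R_1\cup\dotsb\cup R_\ell$ lies in $\trees'_{v,i}$ and is disjoint from $W$. Combined with \lemmaref{lemma:rep-transitivity}, taking a minimal $(k-s_i)$-representative subfamily $\widehat{\trees}_{v,i}$ of $\trees'_{v,i}$ keeps it $(k-s_i)$-representative for $\trees_{v,i}$, and induction on $i$ gives correctness: for the root $k$ we have $s_k=k$, so $\widehat{\trees}_{v,k}$ is $0$-representative for $\trees_{v,k}$, hence non-empty exactly when $\trees_{v,k}$ is, i.e.\ exactly when some copy of $H$ is rooted at $v$; ranging over $v$ detects $H$.

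For the round complexity I would invoke \theoremref{thm:rep-size}: since $\widehat{\trees}_{v,i}$ consists of sets of size $s_i$ and is $(k-s_i)$-representative, $|\widehat{\trees}_{v,i}|\le\binom{k}{s_i}$, so broadcasting it costs $O\!\bigl(s_i\binom{k}{s_i}\bigr)$ rounds. The one point not literally identical to the path case is that the subtree sizes $s_i$ need not be distinct; here I would use that two distinct subtrees of $H$ of the same size are vertex-disjoint, so there are at most $\lfloor k/s\rfloor$ subtrees of size $s$, giving $\sum_i s_i\binom{k}{s_i}\le\sum_{s=1}^k(k/s)\cdot s\binom{k}{s}=k\sum_{s=1}^k\binom{k}{s}\le k2^k$, the claimed bound. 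Finally, annotating each set of $\widehat{\trees}_{v,i}$ with a witness copy of $H[i]$, exactly as for paths, upgrades detection to finding without changing the asymptotics. I expect the representativeness-under-combination step to be the main obstacle, since it requires tracking the shrinking ``budget'' carefully across the $\ell$ sequential replacements; the recurrence and the complexity sum are routine bookkeeping once that step is in place.
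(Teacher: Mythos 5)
Your proposal is correct and follows the same approach as the paper: the same recurrence for $\trees_{v,i}$, the same construction of $\trees'_{v,i}$ from the child families followed by minimisation, and an appeal to \lemmaref{lemma:rep-transitivity}. In fact you supply two details the paper leaves implicit by deferring to the path case---the greedy one-piece-at-a-time replacement argument showing $\trees'_{v,i}$ is $(k-s_i)$-representative, and the observation that at most $\lfloor k/s\rfloor$ subtrees have size $s$ so that $\sum_i s_i\binom{k}{s_i}\le k2^k$---and both are carried out correctly.
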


\paragraph{Finding pseudotrees.} Recall that a pseudotree is a connected graph that has at most one cycle, that is, a graph that consist of a tree and at most one additional edge. Let $H$ be a pseudotree on $k$ vertices, and let us assume that it has a cycle. Let $e = \{ u_1, u_2 \}$ be an arbitrary edge on the cycle.

We slightly modify our tree detection algorithm to find a copy of $H$ as follows. Consider the tree $H'$ obtained from $H$ by deleting the edge $e$. We identify the nodes of $H'$ with $1,2, \dotsc, k$ as before, using $u_2$ as the root node $k$; let us assume that $u_1$ is identified with $j$.

The basic idea is to modify the tree detection algorithm to keep track which nodes of the input graph can play the role of $j$ in a copy of $H'$.
Specifically, when constructing the sets $\widehat{\trees}_{v,i}$, we keep track of which node plays the role of $j$ in the copies of $H'[i]$ we have found so far. This requires the following changes to the tree detection algorithm:
\begin{itemize}
    \item For $i$ that is not on the unique path from $j$ to $k$, no changes are required.
    \item For the step corresponding to $j$, each node $v \in V$ performs the step as before, but marks itself as the node playing the role of $j$ when broadcasting the constructed representative family $\widehat{\trees}_{v,j}$.
    \item For $i$ that is on the path from $j$ to $k$, each node $v \in V$ keeps track of $\widehat{\trees}_{v,i}$ separately for each possible choice of $j$ it sees.
\end{itemize}
After the algorithm is finished, all nodes check if they see a copy of $H'$ where the node $j$ is one of their neighbours, which happens if and only if a copy of $H$ exists in the graph. At each step, each node has to broadcast at most $n$ copies of the representative family, giving us the following:

% \begin{lemma}\label{lemma:pseudotree}
% Let $G = (V,E)$ be the input graph, and let $w \in V$ be a fixed node. Finding a copy of $H$ such that $w$ corresponds to $s_1$ can be done in $O(k2^k)$ rounds in the broadcast \congest{} model.
% \end{lemma}
%
% \begin{proof}
% We slightly modify our tree detection algorithm to find a copy of $H$ such that $w$ corresponds to $s_1$ as follows. Consider the tree $H'$ obtained from $H$ by deleting the edge $e$. We identify the nodes of $H'$ with $1,2, \dotsc, k$ as before, using $s_2$ as the root node $k$; let us assume that $s_1$ is identified with $j$. We now use the tree detection algorithm to detect a copy of $H'$, with following modifications to the algorithm:
% \begin{enumerate}
%     \item All nodes except $w$ perform all steps $i \notin \{ j, k \}$.
%     \item Only node $w$ performs step $j$.
%     \item All nodes $v \in N(w)$ perform step $k$. If any such $v$ detects a copy of $H'$, we know that there is a copy of $H$ where $w$ corresponds to $s_1$.
% \end{enumerate}
% \end{proof}
%
% For finding any copy of a pseudotree $H$, the basic idea is to use the algorithm of \lemmaref{lemma:pseudotree} for all choices of $w$. [[TODO: fill details here]]

\begin{theorem}
For any pseudotree $H$ on $k$ nodes, $H$-subgraph detection can be done in $O(k2^kn)$ rounds in the broadcast \congest{} model.
\end{theorem}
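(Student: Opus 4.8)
The plan is to reduce pseudotree detection to the tree detection algorithm already developed, paying for the cycle closure with an extra factor of $n$. Let $H$ be a pseudotree on $k$ nodes containing a cycle; pick an edge $e = \{u_1, u_2\}$ on the unique cycle and let $H' = H - e$, which is a tree on $k$ nodes. Root $H'$ at $u_2$, identify its nodes with $1, \dotsc, k$ respecting the descendant ordering (so the root is $k$), and let $j$ be the identifier of $u_1$. A copy of $H$ in $G$ is precisely a copy of $H'$ in $G$ in which the node playing the role of $j$ happens to also be adjacent to the node playing the role of $k$ (the root), since adding back that one edge recreates $H$. So it suffices to run the tree detection algorithm for $H'$ while additionally tracking, for each partial copy, the identity of the node occupying role $j$, and then checking the extra adjacency at the end.

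The key steps, in order: first, observe that role $j$ lives on the unique path $P$ from $j$ to the root $k$ in $H'$; for subtrees $H'[i]$ with $i \notin P$, no copy of $H'[i]$ ever contains role $j$, so those steps are literally unchanged. Second, modify the step at $i = j$: each node $v$ that forms an entry of $\widehat{\trees}_{v,j}$ annotates that entry with ``$v$ plays role $j$'', and broadcasts this annotated family. Third, for each $i \in P$ strictly above $j$, a partial copy of $H'[i]$ either already contains a role-$j$ node (inherited from the unique child of $i$ lying on $P$) or does not; node $v$ therefore maintains, for each candidate identifier $w$ it has heard of as a possible role-$j$ node, a separate representative family $\widehat{\trees}_{v,i}^{(w)}$ for the copies of $H'[i]$ rooted at $v$ in which $w$ plays role $j$. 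Crucially, representativeness must still be imposed \emph{within each class} $w$ separately — we cannot merge across different $w$, since the final adjacency test depends on the identity of $w$ — but within a fixed $w$ the same argument as in the tree algorithm (combine children's families, take a minimal $(k - s_i)$-representative subfamily, invoke \lemmaref{lemma:rep-transitivity}) goes through verbatim. Finally, after step $k$, the root candidate $v$ has, for each $w$, a representative family of copies of $H'$ rooted at $v$ with $w$ in role $j$; node $v$ declares a copy of $H$ found iff for some $w \in N(v)$ this family is nonempty, and by representativeness (with $q \ge 1$, applied to the singleton $B = \{w\}$ appropriately — or more simply, since at step $k$ we have $s_k = k$ and the families are exact) this correctly detects every copy of $H$. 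As before, annotating each set with a witness copy lets the detecting node also output a full copy of $H$.

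For the complexity: there are $k$ phases, and in each phase every node now broadcasts at most $n$ representative families (one per candidate role-$j$ identifier $w \in [n]$), each of which, by \theoremref{thm:rep-size}, has size at most $\binom{k}{s_i} \le 2^k$ and is encodable in $O(k 2^k \log n)$ bits, hence broadcastable in $O(k 2^k)$ rounds. Summing $O(n \cdot k 2^k)$ over the (at most $k$, but effectively telescoping as in the path bound) phases gives $O(k 2^k n)$ rounds in broadcast \congest{}, which is the claimed bound. The main obstacle — and the only place the argument genuinely differs from the tree case — is checking that keeping the $n$ role-$j$ classes separate does not break representativeness: one must confirm that it is sound to take a $(k - s_i)$-representative family per class rather than globally, which holds because the only ``external'' query we will ever make against these families is disjointness from the remaining $k - s_i$ vertices of the host copy, a query that never needs to compare two different role-$j$ identifiers. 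Everything else is a routine bookkeeping modification of the tree detection algorithm.
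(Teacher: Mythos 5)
Your proposal is correct and follows essentially the same route as the paper: delete a cycle edge $e=\{u_1,u_2\}$ to obtain a tree $H'$, run the tree-detection algorithm while tracking the identity of the node playing the role $j$ of $u_1$ along the $j$-to-root path (keeping a separate representative family per candidate), and check adjacency to the root at the end, paying a factor of $n$ for the per-candidate families. Your explicit justification that representativeness may be taken within each role-$j$ class is a welcome elaboration of a point the paper leaves implicit.
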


\section{Enumerating cliques and short cycles in degenerate graphs\label{sec:degeneracy}}

\paragraph{Acyclic orientations with bounded outdegree.}

Our algorithms for enumerating cliques and cycles in sparse graphs exploit the fact that degenerate graphs have acyclic orientations with bounded outdegree. Once we have such an orientation, the nodes can co-ordinate how to distribute the information about the edges present in the input graph, which avoids having too much congestion over a single communication link.

For any $\alpha \in \N$, we say that $\sigma$ is an $\alpha$-bounded orientation~\cite{chrobak91orientations} of $G = (V,E)$ if
\begin{enumerate}[noitemsep]
    \item every node $v \in V$ has $\outdeg_\sigma(v) \le \alpha$, and
    \item the orientation $\sigma$ is acyclic. 
\end{enumerate}
For brevity, we call such orientations simply $\alpha$-orientations. The class of degenerate graphs can be characterised by the existence of such orientations.

\begin{lemma}\label{lemma:orientation}
 A graph $G$ is $d$-degenerate if and only if there exists a $d$-orientation of $G$.
\end{lemma}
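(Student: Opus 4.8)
The plan is to prove both directions via a vertex elimination ordering and the orientation naturally associated with it.

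For the ``only if'' direction, assume $G$ is $d$-degenerate. I would build an ordering $v_1, v_2, \dotsc, v_n$ of $V$ greedily from the front: having chosen $v_1, \dotsc, v_{i-1}$, let $v_i$ be a vertex of minimum degree in the induced subgraph $G[\{v_i, \dotsc, v_n\}]$; by $d$-degeneracy applied to that subgraph, $\deg_{G[\{v_i,\dotsc,v_n\}]}(v_i) \le d$, i.e.\ $v_i$ has at most $d$ neighbours with larger index. Now define $\sigma$ by orienting each edge $\{v_i, v_j\}$ with $i < j$ as $v_i \rightarrow v_j$. Then $\outdeg_\sigma(v_i) \le d$ for every $v_i$, since its out-neighbours are precisely its neighbours of larger index; and $\sigma$ is acyclic because every directed edge strictly increases the index, so $v_1, \dotsc, v_n$ is a topological order. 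Hence $\sigma$ is a $d$-orientation.

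For the ``if'' direction, assume $\sigma$ is a $d$-orientation of $G$ and let $G' \subseteq G$ be an arbitrary (nonempty) subgraph. Restricting $\sigma$ to the edges of $G'$ gives an acyclic orientation of $G'$, which therefore admits a topological order; in particular $G'$ has a source, a vertex $v$ with no incoming edge inside $G'$. Every edge of $G'$ incident to $v$ is then oriented away from $v$, so these $\deg_{G'}(v)$ edges are among the at most $d$ out-edges of $v$ in $G$, giving $\deg_{G'}(v) \le d$. Since $G'$ was arbitrary, every subgraph of $G$ contains a vertex of degree at most $d$, so $G$ is $d$-degenerate.

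Neither direction hides a genuine obstacle; the points that need care are (i) choosing the orientation so that \emph{out}-edges, not in-edges, point toward later vertices of the elimination order (getting this backwards would bound indegrees instead), and (ii) the observation that a source of a finite acyclic orientation has degree equal to its outdegree, which is what turns the outdegree bound into the required degeneracy bound on every subgraph. I would present the elimination-ordering construction explicitly rather than just asserting existence, since essentially this same ordering is what the subsequent distributed algorithms will need to compute.
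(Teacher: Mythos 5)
Your proof is correct and is the standard elimination-ordering argument; the paper states this lemma without proof (treating it as folklore), and its subsequent distributed construction --- repeatedly peeling off low-degree vertices and orienting their edges toward the surviving set --- is exactly a relaxed, parallelised form of your ``only if'' direction. Both directions check out, including the key point that a source of the acyclic orientation restricted to a subgraph $G'$ has $\deg_{G'}(v) = \outdeg(v) \le d$.
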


Barenboim and Elkin~\cite{Barenboim2010} gave efficient distributed algorithms for computing such orientations under the name Nash--Williams forests-decompositions~\cite{nash-williams64decomposition}. For the sake of completeness, we formulate more or less the same algorithm here. Barenboim and Elkin~\cite{Barenboim2010} also show how to compute such orientations even without knowing either the degeneracy (equivalently arboricity), or alternatively, the number of nodes in the network in $O(\log n)$ rounds using messages of size $O(\log n)$.

\begin{lemma}
    If $G$ is $d$-degenerate, then it has at most $nd$ edges.
    \label{lemma:degenerate-edge-count}
\end{lemma}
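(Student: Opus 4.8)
The plan is to read this off from the orientation characterisation of degeneracy proved just above. Since $G$ is $d$-degenerate, \lemmaref{lemma:orientation} supplies a $d$-orientation $\sigma$ of $G$: an acyclic orientation in which $\outdeg_\sigma(v) \le d$ for every $v \in V$. First I would note that orienting every edge partitions $E$ according to which endpoint the edge points away from, so that $|E| = \sum_{v \in V} \outdeg_\sigma(v)$. Bounding each summand by $d$ and summing over the $n$ nodes then yields $|E| \le nd$, which is exactly the claim. (Acyclicity of $\sigma$ is not even needed for this count; only the outdegree bound matters.)

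If one prefers a self-contained argument that does not route through \lemmaref{lemma:orientation}, the same bound follows by a short induction on $n$ directly from the definition of degeneracy. The graph $G$ is a subgraph of itself, so it contains a node $v$ with $\deg(v) \le d$. Deleting $v$ gives a graph $G - v$ on $n-1$ nodes, and every subgraph of $G - v$ is a subgraph of $G$, so $G - v$ is again $d$-degenerate; by the induction hypothesis it has at most $(n-1)d$ edges. Adding $v$ back reintroduces at most $d$ edges, for a total of at most $(n-1)d + d = nd$. The base case $n = 0$ (or $n = 1$) is immediate.

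I do not expect any genuine obstacle here: the statement is essentially a restatement of the outdegree bound in the definition of a $d$-orientation. The only point requiring a sentence of care is flagging that the edge–tail correspondence used in the first argument is a genuine partition of $E$ (each edge has exactly one tail under $\sigma$), and, in the inductive version, that $d$-degeneracy is inherited by $G - v$ — both of which are immediate from the definitions recalled in the preliminaries.
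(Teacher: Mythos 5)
Your first argument is exactly the paper's proof: take the $d$-orientation from \lemmaref{lemma:orientation} and count edges by their tails, giving $|E| = \sum_{v \in V} \outdeg(v) \le nd$. The alternative inductive argument is also correct but unnecessary; the proposal matches the paper's approach.
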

\begin{proof}
Consider the $d$-orientation given by \lemmaref{lemma:orientation}. Counting the outgoing edges yields
\[
    |E| = \sum_{v \in V} \outdeg(v) \le nd. \qedhere
\]
\end{proof}

Let $G$ be a $d$-degenerate graph and $C > 2$ be a constant. For all $i \ge 0$, define iteratively the node sets $V_0, V_1, V_2, \dotsc$ as follows:
\begin{align*}
    V_0 &= V\,, \\
    V_{i+1} &= \{ v \in V_i : \deg(v) > Cd \}\,.
\end{align*}
We note that each node can compute $i_v = \max \{ i \in \N \colon v \in V_i \}$ in $i_v$ rounds.

\begin{lemma}
 We have that $|V_{i+1}| \le 2/C \cdot |V_i|$.
\end{lemma}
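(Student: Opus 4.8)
The plan is to run a straightforward averaging argument on the induced subgraph $G[V_i]$. Here I read $\deg(v)$ in the definition of $V_{i+1}$ as the degree of $v$ \emph{inside $G[V_i]$}; this is the only reading that makes the recursion non-trivial (a global-degree reading would give $V_i = V_1$ for all $i \ge 1$) and it matches the remark that each node can compute $i_v$ in $i_v$ rounds, since in round $t$ a node learns which of its neighbours still survive in $V_t$.

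First I would note that $G[V_i]$ is a subgraph of $G$ and is therefore itself $d$-degenerate, because the defining property of $d$-degeneracy --- every subgraph contains a node of degree at most $d$ --- is inherited by subgraphs. Applying \lemmaref{lemma:degenerate-edge-count} to $G[V_i]$, which has $|V_i|$ nodes, gives $|E(G[V_i])| \le d\,|V_i|$, and hence by the handshake identity $\sum_{v \in V_i} \deg_{G[V_i]}(v) = 2\,|E(G[V_i])| \le 2d\,|V_i|$. Next I would invoke the definition of $V_{i+1}$: every $v \in V_{i+1} \subseteq V_i$ satisfies $\deg_{G[V_i]}(v) > Cd$, so $Cd\,|V_{i+1}| < \sum_{v \in V_{i+1}} \deg_{G[V_i]}(v) \le \sum_{v \in V_i} \deg_{G[V_i]}(v) \le 2d\,|V_i|$. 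Dividing through by $Cd$ yields $|V_{i+1}| < (2/C)\,|V_i|$, which is the claimed bound (indeed with strict inequality).

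I do not expect a genuine obstacle: this is a routine double-counting step. The only points that need to be stated carefully are that the degree in the definition of $V_{i+1}$ is taken in $G[V_i]$, and that $d$-degeneracy passes to induced subgraphs, so that \lemmaref{lemma:degenerate-edge-count} may legitimately be applied to $G[V_i]$ rather than to $G$ itself. (As a sanity check, $C > 2$ forces $2/C < 1$, so the sets $V_i$ shrink geometrically, which is what the subsequent argument needs.)
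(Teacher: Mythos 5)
Your proof is correct and is essentially the paper's argument: both apply the edge bound for $d$-degenerate graphs to the induced subgraph $G[V_i]$ and count edges incident to the high-degree nodes, the only difference being that the paper phrases it as a contradiction while you argue directly via the handshake identity. Your explicit observation that $\deg(v)$ must be read as the degree within $G[V_i]$ is exactly the reading the paper's proof implicitly relies on.
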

\begin{proof}
    Suppose the opposite holds, that is, there are $h > 2/C \cdot |V_i|$ nodes $v \in V_i$ with $\deg(v) > Cd$. As $G[V_i] \subseteq G$ is $d$-degenerate, counting the number of edges incident to these $h$ nodes yields that there are at least  
\[
    \frac{Cdh}{2} > \frac{2 \cdot |V_i| Cd}{2C} = |V_i| \cdot d
\]
edges, which contradicts \lemmaref{lemma:degenerate-edge-count}.
\end{proof}

By the above lemma, we get that every iteration we lose a constant fraction $1/c$ of nodes, where $c = C/2$. Thus, $|V_i| \le n/c^i$ and after $r = \log n / \log c + 1 = \log_c n$ iterations we have 
\[
 |V_r| = \frac{n}{c^{\log_c n + 1}} \le 1.
\]

We can now compute an acyclic orientation of $G$ using the above scheme. Each node $v \in V_{i} \setminus V_{i+1}$ removed in iteration $i$ orients any edge $\{v,u\}$ with $u \in V_{i+1}$ towards $u$. Clearly there will be at most $Cd$ edges pointing away from $v$. 

\begin{lemma}[\cite{Barenboim2010}]\label{lemma:computed-orientation}
    Let $G$ be a $d$-degenerate graph. For any constant $\varepsilon > 0$, we can compute a $(2d+\varepsilon)$-orientation of $G$ in $O(\log n)$ rounds in the broadcast \congest{} model.
\end{lemma}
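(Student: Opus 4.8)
The plan is to turn the peeling scheme described above into a concrete broadcast \congest{} algorithm in two stages: first have every node learn its \emph{removal level} $i_v = \max\{i \in \N : v \in V_i\}$, and then orient every edge from the endpoint of smaller level to the endpoint of larger level, breaking ties by identifier.

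For the first stage, I would observe that deciding whether $v \in V_{i+1}$ depends only on how many neighbours of $v$ still lie in $V_i$: so it suffices that in round $i+1$ every node currently in $V_i$ broadcasts a single bit announcing this, after which each such node counts its $V_i$-neighbours and decides whether it belongs to $V_{i+1}$. These are one-bit messages, so they are admissible in the broadcast model. By the shrinking bound $|V_{i+1}| \le (2/C)\,|V_i|$ already proved, we have $|V_i| \le n/c^i$ with $c = C/2 > 1$, hence $V_r = \emptyset$ already for some $r = O(\log_c n) = O(\log n)$; thus after $O(\log n)$ rounds every node knows its level $i_v$. Moreover, every node also learns the level of each neighbour $u$ essentially for free, by recording the last round in which $u$ broadcast its ``still alive'' bit (alternatively, spend one extra round in which each node broadcasts $i_v$, an $O(\log\log n)$-bit number, well within the bandwidth).

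For the second stage, orient $\{u,v\}$ from $u$ to $v$ whenever $i_u < i_v$, or $i_u = i_v$ and $\mathrm{id}(u) < \mathrm{id}(v)$; both endpoints compute the same direction, so the resulting orientation $\sigma$ is well-defined and known locally. Two things then need checking. Acyclicity: along any directed edge the pair $(\,i_x,\,\mathrm{id}(x)\,)$ strictly increases in lexicographic order, so $\sigma$ has no directed cycle. Bounded outdegree: if $i_v = i$ then $v \in V_i \setminus V_{i+1}$, so $v$ has at most $Cd$ neighbours inside $V_i$; every out-neighbour $u$ of $v$ satisfies $i_u \ge i$ and hence lies in $V_i$, so it is among those $\le Cd$ neighbours, giving $\outdeg(v) \le Cd$. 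Taking $C = 2+\varepsilon$ (a constant bounded away from $2$) yields a $(2+\varepsilon)d$-orientation, and the running time is $O(\log_c n) = O(\log n)$ for the constant $c = 1+\varepsilon/2$.

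The point that really needs care is guaranteeing the level computation finishes in $O(\log n)$ rounds rather than, say, $\Theta(d\log n)$: this relies entirely on the geometric shrinkage $|V_{i+1}| \le (2/C)|V_i|$, which is exactly why $C$ must be a fixed constant bounded away from $2$. The other subtlety is that the informal description above does not orient edges between two nodes of the same level; the identifier tie-break fixes this while preserving acyclicity, and since every same-level neighbour of $v$ also lies in $V_i$, it does not spoil the $\outdeg(v) \le Cd$ bound. Everything else — propagating the ``alive'' bits and reading off neighbours' levels — is routine, and no message larger than $O(\log n)$ bits is ever sent, so the algorithm stays within broadcast \congest{}.
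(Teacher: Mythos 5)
Your proof is correct and follows essentially the same peeling scheme the paper uses (constant threshold $Cd$ with $C>2$, geometric shrinkage of the sets $V_i$, levels computed in $O(\log n)$ rounds, edges oriented from lower to higher level); in fact you supply details the paper leaves implicit, namely the identifier tie-break for same-level edges and the explicit acyclicity and outdegree verifications. The only caveat is that your argument yields outdegree $(2+\varepsilon)d$ rather than the literal $2d+\varepsilon$ of the statement, but that is what the construction (and the cited Barenboim--Elkin result) actually gives, so your version is the right one.
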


\paragraph{Enumerating cliques.}

We start with a $k$-clique enumeration algorithm that works in $d$-degenerate graphs for any $k \le d$. Note that the scenario $k > d$ is fatuous, as a $d$-degenerate graph cannot contain a clique with more than $d$ nodes. Let $G = (V,E)$ be the input graph. The algorithm is as follows:
\begin{enumerate}
    \item Compute the $\alpha$-orientation of $G$ for $\alpha \in \Theta(d)$.
    \item Each $v \in V$ broadcasts the endpoints $\nout(v)$ of outgoing edges to all its neighbours. 
    \item Each $v \in V$ locally constructs the induced subgraph $G[F]$, where 
        \[
         F = \{ \{u,w\} : u \in N^+(v), w \in \nout(u) \}.
        \]
    \item Each $v \in V$ outputs all $k$-cliques in $G[F]$.
\end{enumerate}
\figureref{fig:clique-example} illustrates the information gathered by the above algorithm. 

In order to analyse the algorithm, first observe that clearly the algorithm clearly only outputs $k$-clique that are present in $G$. We now argue that if $G$ contains a $k$-clique, then some node will list it in the output. Let $K \subseteq V$ be a $k$-clique in $G$. Note that $G[K]$ must contain a sink node $v \in K$ with respect to the $\alpha$-orientation $\sigma$: if such a sink node does not exist, then every node in $G[K]$ would have positive outdegree implying that $\sigma$ is not acyclic, which is absurd. Hence, the sink node $v \in K$ will receive the set $\nout(u)$ from every $u \in K$. In particular, we have 
\[
    K \subseteq \bigcup_{u \in N(v)} \nout(u).
\]
    Therefore, the sink node $v$ detects the clique $K$ in the subgraph $H$ it constructs in the third step, and thus, will list it in its output.

It remains to analyse the time complexity of the algorithm. By \lemmaref{lemma:computed-orientation}, the first step takes $O(\log n)$ rounds. The second step takes $O(\alpha)$ rounds, as broadcasting a single identifier requires $O(\log n)$ bits to be transmitted along each edge. Since there are at most $\alpha$ such identifiers to be broadcast (as there are at most $\alpha$ outgoing edges) and $\alpha \in O(d)$, we get that this step takes $O(d)$ time. The remaining steps require no communication, and hence, all nodes declare their output within $O(\log n + d)$ rounds. 

\begin{figure}
\begin{center}
 \includegraphics[page=6]{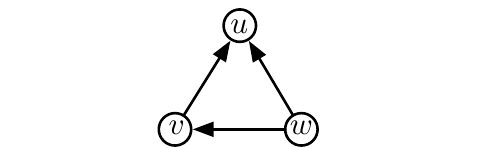}
\end{center}
    \caption{\label{fig:clique-example}Example of the information gathered by the clique detection algorithm. Once an $\alpha$-orientation of the graph has been computed, the algorithm ensures that node $v$ obtains all outgoing edges of its neighbours (black edges). Neighbours of $v$ are are dark gray and the edges not observed by $v$ are gray. Here, node $v$ detects the triangle its part of.}
\end{figure}

\begin{theorem}
    Let $k \le d$. In any $d$-degenerate graph $G$, enumerating $k$-cliques takes $O(d + \log n)$ rounds in the broadcast \congest{} model. 
\end{theorem}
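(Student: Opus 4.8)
The plan is to establish the bound via the four-step algorithm just described, so the proof reduces to verifying correctness and bounding the number of rounds. First I would apply \lemmaref{lemma:computed-orientation} to compute a $(2d+\varepsilon)$-orientation $\sigma$ of $G$ in $O(\log n)$ rounds, and set $\alpha = 2d + \varepsilon = O(d)$. Each node $v$ then broadcasts the identifiers of $\nout(v)$; since $|\nout(v)| = \outdeg_\sigma(v) \le \alpha$ and each identifier takes $O(\log n)$ bits, this costs $O(\alpha) = O(d)$ rounds in broadcast \congest{}. The remaining steps---building $G[F]$ with $F = \{\{u,w\} : u \in N^+(v),\ w \in \nout(u)\}$ and listing all $k$-cliques of $G[F]$---are purely local. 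Summing gives $O(d + \log n)$ rounds. Note also that no step ever requires sending different messages to different neighbours, so the algorithm indeed stays within the broadcast variant of the model.

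Soundness is immediate: $F \subseteq E$, so $G[F]$ is a subgraph of $G$ and every $k$-clique reported by a node is genuinely present in $G$.

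The substantive part is completeness: I must show that every $k$-clique $K$ of $G$ is reported by \emph{some} node. The key observation is that $\sigma$ restricted to $G[K]$ is an acyclic orientation of a complete graph, hence a transitive tournament, and therefore has a unique sink $v \in K$. Since $K$ is a clique, $v$ is adjacent to every other node of $K$, so $v$ receives $\nout(u)$ from all $u \in K$. I would then check that every edge $\{x,y\}$ of $G[K]$ lies in $F$: orienting it, say, $x \to y$, we have $y \in \nout(x)$ and $x \in K \subseteq N(v) \subseteq N^+(v)$, hence $\{x,y\} \in F$. Thus $G[K] \subseteq G[F]$ and $v$ enumerates $K$ in the final step. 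The one place to take care---more a sanity check than a new idea---is that edges incident to the source of the tournament, which is an out-neighbour of nothing inside $K$, are nevertheless captured, through the term $u \in N^+(v)$ of $F$ ranging over that source itself; and that the bandwidth accounting is honest, i.e.\ only $O(d)$ identifiers of $O(\log n)$ bits each are ever sent, independent of the possibly large degrees of individual nodes.
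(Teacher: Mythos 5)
Your proposal is correct and follows essentially the same route as the paper: compute an $O(d)$-orientation via \lemmaref{lemma:computed-orientation}, broadcast out-neighbourhoods, and argue completeness via the unique sink of the acyclic orientation restricted to the clique. Your explicit check that every edge of $G[K]$ lands in $F$ (including edges at the source, via the $u \in N^+(v)$ term) is slightly more careful than the paper's write-up, but the argument is the same.
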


As triangles are 3-cliques, we immediately get the following corollary.

\begin{corollary}
    In any $d$-degenerate graph $G$, enumerating triangles takes $O(d + \log n)$ rounds in the broadcast \congest{} model. 
\end{corollary}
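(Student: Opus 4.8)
The plan is to derive this corollary directly from the preceding $k$-clique enumeration theorem, since a triangle in $G$ is precisely a copy of $K_3$, that is, a $3$-clique. So the first and essentially only step is to instantiate that theorem with $k = 3$.

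The single point needing care is the side condition $k \le d$ appearing in the clique theorem. When $d \ge 3$ this holds for $k = 3$ and the theorem applies verbatim, giving the bound $O(d + \log n)$. When $d \le 2$ I would argue as follows. A triangle is $2$-degenerate, so any graph containing one has degeneracy at least $2$; hence if $d \le 1$ then $G$ is a forest, contains no triangle, and every node halts after computing the orientation in $O(\log n)$ rounds with empty output. This leaves $d = 2$, where $k = 3 > d$. Here I would observe that the hypothesis $k \le d$ is used in the proof of the clique theorem only to discard the vacuous case in which no $k$-clique can exist; the algorithm itself---compute a $\Theta(d)$-orientation via \lemmaref{lemma:computed-orientation}, have each node broadcast the endpoints $\nout(v)$ of its outgoing edges, and then locally enumerate all $k$-cliques in the reconstructed subgraph $G[F]$---is correct for every fixed $k$. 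Indeed, for any triangle $K \subseteq V$ the sink of $K$ under the acyclic orientation receives $\nout(u)$ from each of the other two vertices of $K$, so the subgraph $G[F]$ it builds contains all three edges of $K$ and the sink reports $K$. The round count is unaffected: $O(\log n)$ for the orientation plus $O(d)$ to broadcast the $O(d)$ outgoing endpoints, which for $d = 2$ is $O(\log n) = O(d + \log n)$.

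Combining these cases, running the clique enumeration algorithm with $k = 3$ enumerates all triangles of $G$ in $O(d + \log n)$ rounds of broadcast \congest{}. I do not expect any real obstacle: the entire content is the reduction triangle $= K_3$, and the only thing to be mildly careful about is the $k \le d$ bookkeeping, which is painlessly handled by noting that a graph containing a triangle already satisfies $d \ge 2$ and that the clique algorithm's analysis goes through for any fixed $k$.
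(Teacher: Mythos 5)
Your proposal is correct and matches the paper's own (one-line) derivation: the corollary is obtained by instantiating the $k$-clique enumeration theorem with $k=3$. Your extra care about the $k \le d$ side condition is in fact warranted---a $d$-degenerate graph can contain a $(d+1)$-clique (a triangle is $2$-degenerate), so the $d=2$ case is not vacuous, and your observation that the algorithm and its analysis go through unchanged for any fixed $k$ correctly closes this small gap that the paper glosses over.
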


\paragraph{Enumerating 4-cycles.} In the case of 4-cycles, it turns out that we can use the same algorithm as for enumerating cliques, except we check for the existence of a 4-cycle in the locally constructed subgraph $G[F]$ instead of a $k$-clique. Let $C = \{u,v,w,z\}$ be a 4-cycle in $G$. As before, we first obtain $\alpha$-orientation $\sigma$ of the $d$-degenerate input graph $G$ using \lemmaref{lemma:computed-orientation}. Since the orientation is acyclic, the cycle graph $G[C]$ must have at least one source and a sink in the orientation. In particular, there are three possible cases (up to isomorphisms):
\begin{center}
 \includegraphics[page=2,scale=0.9]{figures.pdf}
\end{center}
To show that some node detects the cycle $C$, we consider each of these three cases. In each case, observe that $u$ sends $v$ the set $\nout(u)$ and $w$ sends the set $\nout(w)$ to $v$. Thus, node $v$ learns about the edges $\{u,z\}$ and $\{w,z\}$ as $z \in \nout(u) \cap \nout(w)$. Since $v$ trivially knows about the edges $\{u,v\}$ and $\{v,w\}$, we get that in each case node $v$ learns about all edges in $C$ and lists $C$ in its output. Note that if we want that each $4$-cycle is listed by exactly one node, the nodes can detect if multiple nodes would be listing the same cycle $C$ and output $C$ only if they are the node with smallest identifier having knowledge about it.

As the algorithm communicates the same information as the clique detection algorithm given before, we get that the time complexity of detecting 4-cycles is the same. 

\begin{theorem}
    In any $d$-degenerate graph $G$, enumerating $4$-cycles takes $O(d + \log n)$ rounds in the broadcast \congest{} model.
\end{theorem}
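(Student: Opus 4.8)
The plan is to run exactly the clique-enumeration algorithm from the previous subsection and change only its last, purely local step: after each node $v$ has assembled the subgraph $G[F]$ with $F=\{\{u,w\}:u\in N^+(v),\ w\in\nout(u)\}$, it tests $G[F]$ for copies of $C_4$ rather than for $k$-cliques, and outputs every $4$-cycle it finds. Concretely the protocol is: (i) compute a $(2d+\varepsilon)$-orientation $\sigma$ of $G$ via \lemmaref{lemma:computed-orientation} in $O(\log n)$ rounds; (ii) every node $v$ broadcasts the at most $2d+\varepsilon$ identifiers of $\nout(v)$ to all of its neighbours, which takes $O(d)$ rounds since each identifier is $O(\log n)$ bits and $2d+\varepsilon=O(d)$; (iii) every node builds its $G[F]$ and reports the $4$-cycles in it, using no communication. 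The round complexity $O(d+\log n)$ is then inherited verbatim from the clique algorithm, and soundness is immediate, since $G[F]\subseteq G$ for every node, so every reported copy really occurs in $G$.

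What remains is completeness: every copy of $C_4$ in $G$ is reported by some node. Fix such a copy. The restriction of the acyclic orientation $\sigma$ to the four edges of this copy is again acyclic, hence has a sink; relabel the four vertices cyclically as $u,v,w,z$ with edges $\{z,u\},\{u,v\},\{v,w\},\{w,z\}$ so that $z$ is this sink. Then both cycle-edges at $z$ point into $z$, i.e. $z\in\nout(u)$ and $z\in\nout(w)$. Let $v$ be the vertex antipodal to $z$ on the cycle. Since $u,w\in N(v)$, node $v$ receives $\nout(u)$ and $\nout(w)$ in step (ii), so $\{u,z\}$ and $\{w,z\}$ both lie in the set $F$ it constructs. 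Moreover $\{u,v\}\in F$ as well: if $\sigma$ orients $\{u,v\}$ as $v\to u$ then $u\in\nout(v)$ and we use $v\in N^+(v)$, while if it orients it as $u\to v$ then $v\in\nout(u)$ and we use $u\in N(v)$; symmetrically $\{v,w\}\in F$. Hence all four edges of the copy lie in $G[F]$ for this particular $v$, and $v$ outputs the cycle. (Equivalently one could enumerate the three isomorphism types of source/sink placements on $C_4$ and verify each separately; the sink argument handles them uniformly.) If one additionally wants each copy reported by exactly one node, a node about to output a copy $C$ can first check whether any lower-identifier node also has all four edges of $C$ in its $G[F]$ and, if so, suppress its own output.

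I do not expect a genuine obstacle here: the real work sits in \lemmaref{lemma:computed-orientation} and in the $O(d)$ broadcast accounting already set up for cliques, and the only new content is the one-line observation that a $4$-cycle's sink forces the antipodal vertex to see all four of its edges. The single point that calls for a moment's care is verifying that $v$'s own two cycle-edges are in $F$ regardless of how $\sigma$ orients them — precisely the case split made above — together with the identifier-based tie-breaking if one insists on listing every copy exactly once.
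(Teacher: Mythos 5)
Your proposal is correct and follows essentially the same route as the paper: reuse the clique-enumeration protocol verbatim and argue completeness via the acyclicity of the orientation restricted to the $4$-cycle. The only cosmetic difference is that the paper enumerates the three possible orientations of $C_4$ explicitly (with a figure), whereas you unify them by naming the sink $z$ and its antipodal vertex $v$ directly; your added case split confirming that $v$'s own two cycle-edges lie in $F$ is a detail the paper dismisses as trivial, and both are fine.
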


\paragraph{Finding 5-cycles.} In order to enumerate 5-cycles, we use the same underlying idea, but now nodes also communicate directed outgoing paths of length 2 instead of just single outgoing edges. With this modification in mind, the algorithm is as follows:
\begin{enumerate}
    \item Compute the $\alpha$-orientation of $G$ for $\alpha \in \Theta(d)$.
    \item Each $v \in G$ broadcasts its endpoints $\nout(v)$ of outgoing edges to all its neighbours. 
    \item Each $v \in G$ broadcasts the set of outgoing length-2 paths
        \[  
            L(v) = \{ \{u,w \} : u \in \nout(v), w \in \nout(u) \}
        \]
        to all its neighbours.
    \item Each $v \in G$ locally constructs the subgraph $G[F]$, where 
        \[
         F = \left\{ \{u,w\} : u \in N^+(v), w \in \nout(u) \right \} \cup \bigcup_{u \in \nout(v)} L(u).
        \]
    \item Each $v \in V$ outputs all $5$-cycles in $G[F]$.
\end{enumerate}

Let $C = \{u,v,w,x,y\}$ be a 5-cycle in $G$. To see that the algorithm outputs the cycle~$C$, it suffices to show that some node $v \in C$ learns about all the edges in $C$. As the $\alpha$-orientation is acyclic, the cycle $C$ can be oriented in the following three ways (up to isomorphisms):
\begin{center}
 \includegraphics[page=4,scale=0.9]{figures.pdf}
\end{center}
To see that this is the case, one can readily observe that the length of the longest directed path in an acyclically oriented $5$-cycle has to be either $2$, $3$ or $4$, and the length of this path determines the orientation of all other edges.

We argue that node $v$ detects the cycle $C$ in all cases. Observe that node $x$ broadcasts $\nout(x)$ to $u$ in the second step and node $u$ broadcasts $\nout(u)$ and $L(u)$ to $v$ in the second and third steps, respectively. Since $y \in \nout(x)$, we have $\{x,y\} \in L(u)$ and $x \in \nout(u)$. Therefore, $v$ learns about the path $(u,x,y)$. Since $w$ broadcasts $\nout(w)$ to $v$, node $v$ also learns about the edge $\{w,y\}$. Since node $v$ can trivially detect the edges $\{u,v\}$ and $\{v,w\}$, it follows that node $v$ detects the cycle $C$ in the final step.

The time complexity of the first step is again $O(\log n)$. The second step consists of broadcasting the set of up to $\alpha$ identifiers, which takes $O(\alpha)$ rounds as before. In the third step, each node $v \in G$ broadcasts $L(v)$ which may contain up to $O(\alpha^2)$ edges. Thus, the third step takes $O(\alpha^2)$ rounds. No communication occurs in the final steps of the algorithm, which yields that the total time complexity of the algorithm is $O(\log n + \alpha^2)$. As we can choose $\alpha \in \Theta(d)$, we obtain the following result.

\begin{theorem}
    In any $d$-degenerate graph $G$, enumerating $5$-cycles takes $O(d^2 + \log n)$ rounds in the broadcast \congest{} model.
\end{theorem}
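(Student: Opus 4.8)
The plan is to mimic the $4$-cycle enumeration algorithm but to let every node broadcast, in addition to its outgoing edges, all of its outgoing directed paths of length~$2$; this extra information is exactly what is needed so that some node of an arbitrary $5$-cycle can reconstruct the whole cycle locally. Concretely, I would first apply \lemmaref{lemma:computed-orientation} to obtain a $(2d+\varepsilon)$-bounded acyclic orientation $\sigma$ of $G$ in $O(\log n)$ rounds, and set $\alpha = 2d + \varepsilon \in \Theta(d)$. Next, every node $v$ broadcasts the identifiers of $\nout(v)$ (at most $\alpha$ of them), and then broadcasts the set $L(v) = \bigl\{ \{u,w\} : u \in \nout(v),\ w \in \nout(u) \bigr\}$ of its outgoing length-$2$ paths (at most $\alpha^2$ of them). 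Finally each $v$ assembles the edge set $F = \bigl\{ \{u,w\} : u \in N^+(v),\ w \in \nout(u) \bigr\} \cup \bigcup_{u \in \nout(v)} L(u)$, builds $G[F]$ locally, and outputs every $5$-cycle it finds; if each cycle is to be reported only once, the node of smallest identifier among those seeing all of it is made responsible.

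Soundness is immediate since $F \subseteq E$, so every reported cycle is a genuine $5$-cycle of $G$. For completeness I would take a $5$-cycle $C$ in $G$ and argue that some vertex of $C$ ends up knowing all five of its edges. The combinatorial crux is that, since $\sigma$ restricted to $C$ is acyclic, $C$ has either one source and one sink or two sources and two sinks; accordingly the longest directed subpath of $C$ has length $4$, $3$, or $2$, and in each case the orientation of $C$ is unique up to the dihedral symmetry of the $5$-cycle, giving exactly three cases. In each case I would single out a vertex $v$ with cycle-neighbours $u$ and $w$ and remaining vertices $x,y$ labelled so that the directed structure routes the ``far side'' $u\,x\,y$ of $C$ (or its reverse) as a length-$2$ out-path from $u$, and the edge $\{w,y\}$ as an out-edge of $w$. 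Then $v$ knows $\{u,v\}$ and $\{v,w\}$ for free; from $L(u)$ and $\nout(u)$ it learns the path $(u,x,y)$, hence the edges $\{u,x\}$ and $\{x,y\}$; and from $\nout(w)$ it learns $\{w,y\}$. Thus all of $C$ lies in $G[F]$ at $v$, and $v$ outputs it.

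For the running time I would add up the phases: the orientation costs $O(\log n)$ by \lemmaref{lemma:computed-orientation}; broadcasting $\nout(v)$ costs $O(\alpha)$ rounds ($\alpha$ identifiers of $O(\log n)$ bits each); broadcasting $L(v)$ costs $O(\alpha^2)$ rounds; the local construction of $G[F]$ and the enumeration of its $5$-cycles need no communication. With $\alpha \in \Theta(d)$ this totals $O(d^2 + \log n)$ rounds, as required.

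The step I expect to be the main obstacle is the completeness case analysis: for each of the three acyclic orientations of the $5$-cycle one has to choose the collector vertex $v$ and the labelling of the other four vertices so that the two cycle-edges not touching $v$ are exactly covered by the broadcast data — a single out-edge of one neighbour and an out-path of length~$2$ from the other. This is also where one sees \emph{why} length-$2$ paths are necessary and single out-edges do not suffice: in the orientation whose longest directed subpath has length~$4$, no vertex can see all of $C$ from edge-broadcasts alone, but the $L(\cdot)$ sets push the reach far enough.
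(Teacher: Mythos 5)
Your proposal is correct and matches the paper's proof essentially verbatim: the same $\Theta(d)$-bounded acyclic orientation via \lemmaref{lemma:computed-orientation}, the same broadcasts of $\nout(v)$ and of the length-2 out-path set $L(v)$, the same locally assembled edge set $F$, and the same three-case completeness analysis according to the length of the longest directed path on the cycle, with the identical round count $O(\log n) + O(\alpha) + O(\alpha^2)$. One small caveat, which you inherit from copying the paper's formula for $F$: in the orientation with two sources and two sinks the collector vertex $v$ must use $L(u)$ for a cycle-neighbour $u$ that is an \emph{in}-neighbour of $v$, so the union should be taken over all $u \in N(v)$ rather than only $u \in \nout(v)$ --- this is harmless since every node broadcasts $L(\cdot)$ to all of its neighbours anyway, but it is exactly the spot in the case analysis you flagged as the crux.
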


\begin{figure}
\begin{center}
 \includegraphics[page=5]{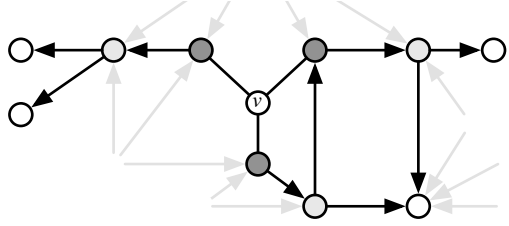}
\end{center}
    \caption{Example of information the nodes gather when executing the 5-cycle detection algorithm. Once an $\alpha$-orientation of the graph has been computed, the algorithm ensures that node $v$ obtains all outgoing length-1 and length-2 paths (black directed edges) of its neighbours. Neighbours of $v$ are are dark gray, nodes at distance two are light gray and nodes at distance three are white.}
\end{figure}

\section{Lower bounds}

\paragraph{Lower bounds for long cycles.}

We first prove our lower bounds for the detection of long cycles:  finding $k$-cycles for even $k$ requires $\Omega(n^{1/2}/\log n)$ rounds, and there is no algorithm for finding $k$-cycles for $k \ge 6$ that runs in $O(f(d) n^{\delta})$ rounds on $d$-degenerate graphs for any function $f$ and $\delta < 1/2$.
We use a slight modification of the cycle detection lower bound construction of Drucker et al.~\cite{drucker13}; that is, we prove the lower bound by a reduction to known set disjointness lower bounds in two-player communication complexity.

Specifically, given sets $A$ and $B$ over $M$-element universe, we construct a $n$-node graph $G_{A,B} = (V_A \cup V_B, E)$ such that (1) $G[V_A]$ depends only on $A$, (2) $G[V_B]$ depends only on $B$, (3) there are at most $C$ edges crossing the cut $(V_A,V_B)$, and (4) $G_{A,B}$ contains a cycle of length $k$ if and only if $A$ and $B$ are non-disjoint. This allows us to use any $k$-cycle detection to solve 2-party set disjointness; Alice simulates all nodes in $V_A$, Bob simulates nodes in $V_B$, and messages over edges crossing the cut $(V_A, V_B)$ are sent between players. By known lower bounds,  $\Omega(M)$ bits have to be sent over the cut, giving lower bound of $\Omega(M/C\log n)$ rounds in the \congest{} model.

We show that for any $k \ge 6$ and all $N \in \N$, there is a construction satisfying the above with parameters $M \in \Theta(N^2)$, $n \in \Theta(N^2)$ and  $C = 2N$, with the additional property that any graph obtained from this construction is $2$-degenerate. This implies both of the lower bounds mentioned above.

\begin{figure}
\begin{center}
 \includegraphics[page=8,scale=1.1]{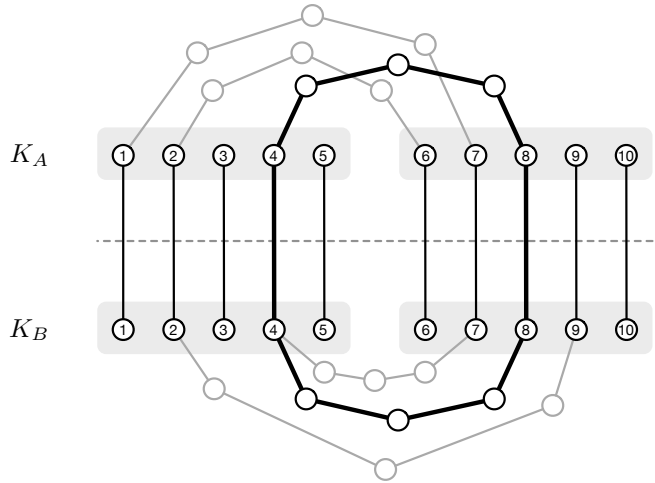}
\end{center}
    \caption{Example of the lower bound construction for $k = 8$, $N = 5$ and $M = 25$. A $8$-cycle is highlighted in black; note that a 8-cycle can occur only when the corresponding paths are present both in $K_A$ and $K_B$.}
\end{figure}

Let $k \ge 6$ and $N \in \N$ be fixed, and let $A, B \subseteq 2^{[N^2]}$ be a set disjointness instance. Let $\ell_1 = \lfloor k/2 \rfloor$ and  $\ell_2 = \lceil k/2 \rceil$. We take two disjoint copies $K_A$ and $K_B$ of a complete bipartite graph $K_{N,N}$ on $2N$ nodes; assume that the nodes of the both copies are labelled in a consistent manner with integers $1,2,\dotsc, 2N$, and that the edges are likewise consistently labelled with the elements of $2^{[N^2]}$. We now finish the construction as follows:
\begin{enumerate}
    \item We connect nodes labelled with the same integer in $K_A$ and $K_B$ with an edge.
    \item We remove each edge in $K_A$ if the corresponding element in $2^{[N^2]}$ is not in $A$, and we remove each edge in $K_B$ if the corresponding element in $2^{[N^2]}$ is not in $B$.
    \item We replace all remaining edges in $K_A$ with $(\ell_1 - 1)$-paths, and all remaining edges in $K_B$ with $(\ell_2 - 1)$-paths.
\end{enumerate}
It is now easy to verify that there are no $k$-cycles remaining inside either $K_A$ or $K_B$, and there is a $k$-cycle if and only if $A \cap B \neq \emptyset$. The final graph has at most $(k-4)N^2 + 4N$ nodes and at most $(k-2) N^2 + 2N$ edges; the cut between $K_A$ and $K_B$ has size $2N$. Finally, it is easy to see that the graph has degeneracy $2$. For each path added in step (3), we pick an internal node and orient all edges away from it, and orient the cut edges between $K_A$ and $K_B$ arbitrarily; the resulting orientation has maximum outdegree two.

\paragraph{Lower bounds for short cycles.} We observe that the lower bounds given by Drucker et al.~\cite{drucker13} imply that finding $4$-cycles and $5$-cycles in $d$-degenerate graphs requires $\Omega(d / \log n)$ rounds: a careful examination shows that their lower bound graph for $4$-cycles has degeneracy $\Theta(n^{1/2})$ and the lower bound graph for $5$-cycles has degeneracy $\Theta(n)$.

\section{Extensions to the supported \texorpdfstring{\boldmath\congest{}}{CONGEST}}

\paragraph{The supported \texorpdfstring{\boldmath\congest{}}{CONGEST} model.}

Finally, we discuss how our results can be extended to the supported \congest{} proposed by Schmid and Suomela~\cite{schmid13local-sdn} for studying distributed algorithms in the context of \emph{software defined networking} (SDN). 

In the supported \congest{} model, the underlying physical network topology is represented by a \emph{support graph} $G = (V,E)$ and the actual \emph{input graph} $H$ for the distributed algorithm is an arbitrary subgraph of the support. The input graph $H = (U,F)$ inherits the identifiers of the support graph and can be seen as the current logical state of the network. Each node $v$ in the network $G$ is aware of the full topological information about $G$, but node $v$ has only local information about the current logical state of the network, that is, the actual input graph $H$.
That is, initially node $v$ will know only which of the edges incident to it in $G$ are also present in $H$. Note that supported \congest{} can be seen as a relaxation of the congested clique model: in the case of the congested clique model, the support graph is a clique. 

While many symmetry breaking problems are trivial in the supported \congest{} model, the model remains interesting from the perspective of subgraph detection. Even though all nodes will a priori know that any edge $\{u,v\}$ not present in $G$ will also not be in $H$, only nodes $u$ and $v$ initially know whether $\{u,v\} \in F$ as well. Thus, the difficulty now becomes verifying which of the possible subgraphs in $G$ remain in $H$.

\paragraph{Subgraph enumeration in sparse support graphs.} We can modify the algorithms given in \sectionref{sec:degeneracy} to run faster in the supported \congest{} model, in the case where the support graph has bounded degeneracy. First, all nodes can locally determine the degeneracy of the support graph $G$ and compute (the same) $d$-orientation $\sigma$ of $G$ without any communication. The orientation $\sigma$ restricted to the input graph $H$ will also be an $d$-orientation of the input $H$. This saves the additive $O(\log n)$ term in the running time. Moreover, when the nodes communicate their outgoing edges, as every node $v \in V$ knows which outgoing edges of $\sigma$ \emph{can be} present incident to $u \in V$, it suffices that each node $u$ broadcasts $d$-length bit string encoding which edges of $G$ are present in the input $H$. With these modifications in mind, we get the following result. 

\begin{theorem}
In any $d$-degenerate support graph $G$  in the supported \congest{} model, 
   \begin{enumerate}
       \item enumerating $k$-clique can be done in $O(d/\log n)$ rounds for any $k$,
       \item enumerating $4$-cycles can be done in $O(d/\log n)$ rounds, and 
       \item enumerating $5$-cycles can be done in $O(d^2/\log n)$ rounds.
   \end{enumerate} 
\end{theorem}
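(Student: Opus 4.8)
The plan is to reuse the clique, $4$-cycle and $5$-cycle enumeration algorithms of \sectionref{sec:degeneracy} essentially verbatim, making only two changes that exploit the extra knowledge available in the supported model: the acyclic bounded-outdegree orientation is obtained for free, and every message that originally carried a list of identifiers can instead carry a bit string indicating which of a \emph{publicly known} set of candidate edges are actually present in the input graph $H$. Since $H \subseteq G$ and every subgraph of a $d$-degenerate graph is $d$-degenerate, the correctness arguments of \sectionref{sec:degeneracy} (the sink-node and source-node case analyses) apply to $H$ unchanged once we exhibit a valid $d$-orientation of $H$; so the bulk of the work is bounding the communication.

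First I would observe that, since every node knows the entire support graph $G$, every node can run the same deterministic centralised procedure (e.g.\ repeatedly delete a minimum-degree vertex of smallest identifier, orienting its incident edges outward) to compute one fixed $d$-orientation $\sigma$ of $G$ by \lemmaref{lemma:orientation}, with no communication at all; this already removes the additive $O(\log n)$ term incurred via \lemmaref{lemma:computed-orientation} in the \congest{} algorithms. Moreover, $\sigma$ restricted to the edges of $H$ is acyclic (a subgraph of an acyclic orientation) and has $\outdeg(v) \le d$ for every $v$, so $\sigma|_H$ is a $d$-orientation of $H$; hence the $\alpha$-orientation used in \sectionref{sec:degeneracy} can be taken to be $\sigma|_H$ with $\alpha = d$, and each node already knows which of its incident edges lie in $H$ by definition of the model.

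Next I would handle the communication. In the clique and $4$-cycle algorithms the only communication is the step where each node $u$ announces $\nout(u)$. In the supported model every node already knows the set $\nout^G(u)$ of \emph{potential} outgoing neighbours of $u$ under $\sigma$, and $|\nout^G(u)| \le d$; so it suffices for $u$ to broadcast the $|\nout^G(u)|$-bit indicator string of $\nout^H(u) \subseteq \nout^G(u)$. A string of at most $d$ bits is sent in $\lceil d/\Theta(\log n)\rceil = O(d/\log n)$ rounds (the regimes $d = O(\log n)$, and $k > d$ for cliques, being trivial), which gives items~(1) and~(2). For the $5$-cycle algorithm, after this first exchange every node $v$ knows $\nout^H(u)$ for all its $H$-neighbours $u$, hence can compute its true set $L(v)$ of outgoing length-$2$ paths in $H$; and every node knows the superset $L^G(v)$ of at most $|\nout^G(v)|\cdot\max_u|\nout^G(u)| = O(d^2)$ candidate length-$2$ paths, so $v$ need only broadcast the $O(d^2)$-bit indicator of $L(v)\subseteq L^G(v)$, costing $O(d^2/\log n)$ rounds. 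Adding the two phases gives $O(d/\log n + d^2/\log n) = O(d^2/\log n)$, proving~(3).

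The step I expect to require the most care is the $5$-cycle bound: one must verify that (a)~$v$ can indeed reconstruct $L(v)$ locally from the first-round messages, which is why the first phase must precede the second exactly as in \sectionref{sec:degeneracy}, and (b)~the candidate set $L^G(v)$ used by the receivers to decode is determined by $G$ and $\sigma$ alone and contains $L(v)$, so that an $O(d^2)$-bit indicator suffices. Everything else is a direct transcription of the arguments in \sectionref{sec:degeneracy}.
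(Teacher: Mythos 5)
Your proposal is correct and follows essentially the same route as the paper: compute a common $d$-orientation of the support graph locally from full knowledge of $G$, restrict it to $H$, and replace each list of outgoing edges (respectively, outgoing length-2 paths) by an indicator bit string over the publicly known candidate set, packed into $O(\log n)$-bit messages. Your treatment of the $5$-cycle case (reconstructing $L(v)$ after the first phase and decoding against $L^G(v)$) is in fact spelled out in more detail than in the paper, which leaves that step implicit.
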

In particular, in the case that the support graph has degeneracy $O(\log n)$, we can enumerate cliques and 4-cycles in constant number of rounds in the supported \congest{} model. 

\paragraph{Cycle detection lower bounds.} Finally, we note that all lower bounds from the present work as well as those by Drucker et al.~\cite{drucker13} hold in the supported \congest{}. All these lower bounds are obtained by starting from a fixed base graph $G$, and removing certain edges to obtain a graph $G'$ that encodes a set disjointness instance. Taking the base graph $G$ as the support graph and $G'$ as the input graph yields identical lower bounds for the supported \congest{}.

\section{Conclusions}

\paragraph{Subgraph detection.} We note that there still remain major open questions regarding subgraph detection in the \congest{} model:
\begin{itemize}
    \item What is the complexity of general subgraph detection? In particular, can the subgraph detection problem solved in linear number of rounds for any constant-size target graph $H$, or are the target graphs that require a superlinear number of rounds? Note that $k$-cliques, which appear to be the most difficult case for centralised subgraph detection, can be trivially detected in $O(n)$ rounds for any $k$.
    \item What is the precise complexity of detecting even-length cycles? Is there an algorithm matching the $\tilde{\Omega}(n^{1/2})$ lower bound we give, or can this lower bound be improved?
\end{itemize}

\paragraph{Fixed-parameter techniques.} We remark that the many algorithmic techniques from fixed-parameter cycle and path detection algorithms seem to translate very naturally to the distributed limited bandwidth setting, as they effectively either (a) partition the task at hand into multiple independent instances of an easier task (e.g.\ colour-coding), or (b) compress the intermediate results of the computation (e.g.\ representative families). As noted before, the seminal colour-coding technique of Alon et al.~\cite{alon1995color} has been used by various authors for distributed algorithms~\cite{censor2015algebraic,2017arXiv170504898E,firscher2017}; indeed, the results in the present work can also be derived via colour-coding, albeit with slightly worse dependence on $k$, by an easy modification of the congested clique cycle detection algorithm of Censor-Hillel et al.~\cite{censor2015algebraic}. We also expect that the $O(k2^k)$ round running time for path detection could be improved with randomisation, for example by using algebraic \emph{sieving} techniques~\cite{sieves2017}.

More generally, such fixed-parameter techniques are applicable in the centralised setting beyond subgraph detection, and we expect this to be the case also in the distributed setting. For example, it seems likely that constant-round distributed algorithms for the \emph{graph motif} problem can be obtained either via colour-coding or algebraic sieving~\cite{sieves2017,betzler2008parameterized}. Indeed, this general line of research may even have practical relevance, as evidenced by the efficient parallel implementation of fixed-parameter graph motif algorithms by Bj\"{o}rklund et al.~\cite{doi:10.1137/1.9781611973754.10}.

\medskip

\paragraph{Acknowledgements} This work was supported in part by the Academy of Finland, Grants 285721 and 1273253. We thank Juho Hirvonen, Dennis Olivetti, Rotem Oshman, Chris Purcell, Stefan Schmid and Jukka Suomela for valuable comments and discussions. 

\DeclareUrlCommand{\Doi}{\urlstyle{same}}
\renewcommand{\doi}[1]{\href{http://dx.doi.org/#1}{\footnotesize\sf doi:\Doi{#1}}}

\bibliographystyle{plainnat}
\bibliography{congest-subgraphs}

\end{document}